\newtheorem{theorem}{Theorem}[section]
\newtheorem{lemma}[theorem]{Lemma}
\newtheorem{corollary}[theorem]{Corollary}
\newtheorem{prop}[theorem]{Proposition}
\theoremstyle{definition}
\theoremstyle{remark}
\newtheorem*{remark}{Remark}
\newcommand{\J}{{\cal J}}
\begin{document}

\title{Clusterization and phase diagram of the bimodal Kuramoto model with bounded confidence}

\author{Andr\'e Reggio}
\affiliation{School of Engineering, University of Applied Sciences of Western Switzerland, CH-1950 Sion, Switzerland}

\author{Robin Delabays}
\email{robin.delabays@hevs.ch}
\affiliation{School of Engineering, University of Applied Sciences of Western Switzerland, CH-1950 Sion, Switzerland}
\affiliation{Automatic Control Laboratory, Swiss Federal Institute of Technology (ETH), CH-8092 Z\"urich,  Switzerland}

\author{Philippe Jacquod}
\affiliation{School of Engineering, University of Applied Sciences of Western Switzerland, CH-1950 Sion, Switzerland}
\affiliation{Department of Quantum Matter Physics, University of Geneva, CH-1211 Geneva, Switzerland}
\date{\today}

\begin{abstract}
 Inspired by the Deffuant and Hegselmann-Krause models of opinion dynamics, we extend the Kuramoto model to account for confidence bounds, i.e., vanishing interactions between pairs of oscillators when their phases differ by more than a certain value. 
 We focus on Kuramoto oscillators with peaked, bimodal distribution of natural frequencies.
 We show that, in this case, the fixed-points for the extended model are made of certain numbers of independent clusters of oscillators, depending on the length of the confidence bound -- the interaction range --  and the distance between the two peaks of the bimodal distribution of natural frequencies. 
 This allows us to construct the phase diagram of attractive fixed-points for the bimodal Kuramoto model with  bounded confidence and to analytically explain clusterization in dynamical systems with bounded confidence. 
\end{abstract}

\maketitle

\begin{quotation}
There are many, very different physical systems that can be mathematically modelled by sets of coupled oscillators. 
When the coupling between individual oscillators is strong enough, the oscillators start to swing in unison, at the same frequency, even if their natural frequencies are quite different. 
The Kuramoto model is a popular model to describe this phenomenon of synchrony. 
Many physically important systems are characterized by couplings between pairs of oscillators that depend on the position of the latter. 
To investigate such systems, we construct an extension of the Kuramoto model, where the coupling vanishes for pairs of oscillators that lie too far away from one another. 
Focusing on a peaked bimodal distribution of natural frequencies, we show that the synchronous states are characterized by a clustering of the oscillators, where the system as a whole splits into disconnected subsystems that synchronize independently. The resulting partitioning allows us to calculate linear stability and domain of existence of these solutions, and to construct the phase diagram for the synchronous states vs. the system's parameters. 
\end{quotation}

\section{Introduction}

The scientific study of emergent phenomena is based on systems of interacting units - particles, agents, dynamical systems and so forth. 
Emergent phenomena are phenomena that cannot be comprehended from the individual dynamics of the interacting units. 
They would not occur without interaction and accordingly, they often cannot be anticipated by more microscopic approaches focusing on the system's components. 
One such emergent phenomenon is that of synchrony, where sufficiently strongly coupled oscillators start to swing in unison. Synchrony emerges out of the competition between the oscillator's natural tendency to swing at their own, natural frequency and the inter-oscillator coupling that tends to bind them together. 
To investigate this competition, Kuramoto introduced his celebrated model~\cite{Kur84,Ace05}
\begin{equation}\label{eq:kuramoto}
\dot\theta_i = \omega_i - \frac{K}{N} \sum_j \sin(\theta_i-\theta_j) \, .
\end{equation}
It describes $i=1, \ldots N$ oscillators of natural frequency  $\omega_i$ coupled to one another with a sinusoidal interaction of strength $K/N$.
When the latter exceeds a critical value the model synchronizes and the oscillators
rotate coherently at the same frequency. The critical interaction strength leading to synchrony depends on the distribution of natural frequencies.~\cite{Ace05}
Of particular interest to us in this manuscript are cases of bimodal distributions.~\cite{Mon04,DeSme08,Mar09}

Most systems of physical interest have however finite-range interactions. This motivated 
investigations of the Kuramoto model with couplings defined on incomplete graphs.~\cite{Ace05,Jad04}
A question of interest there is "how many phase-locked solutions to the Kuramoto model are there?", i.e., how many 
fixed-point solutions with $\dot\theta_i=0$ for all $i$ are there to Eq.~\eqref{eq:kuramoto}? 
It turns out that answering that question is hard, and depends on both the coupling network topology and the distribution of natural frequencies.~\cite{Kor72,Bai82,Jan03,Rog04,Och10,Ngu14,Meh15,Del16,Man17,Del17a,Del19} 
Another interesting aspect is that fixed-point solutions to Eq.~\eqref{eq:kuramoto} are characterized by their basin of attraction, and different solutions generically have basins with different volumes.~\cite{Wil06,Del17b,Del19}

Modelling a finite range of interaction can be achieved either by considering inter-oscillator couplings defined on incomplete graphs, or couplings depending on the phase of the oscillators. 
The Kuramoto model has been extended to account for coupling reduction when the phase difference between coupled oscillators increases,
\begin{align}\label{eq:kuramotoexp}
\dot\theta_i &= \omega_i - \frac{\alpha K}{N} \sum_j \sin(\theta_i-\theta_j) e^{-\alpha |\theta_i-\theta_j|} \, .
\end{align}
Because the model was applied to the problem of opinion synchronization with $\omega_i \ne 0$, it was named the {\it Opinion Changing Rate} model.~\cite{Plu06} 
With the sharp, exponential decrease of the interaction strength when oscillator phases differ, it is reminiscent of the Deffuant~\cite{Def00} and Hegselmann-Krause~\cite{Heg02} models, where agents interact only if their coordinate difference is below some threshold  $\Delta$, called the confidence bound.
The Hegselmann-Krause model~\cite{Heg02} is defined by
\begin{align}\label{eq:confidencebound}
\dot\theta_i &= - \frac{K}{N} \sum_j a_{ij} (\theta_i-\theta_j)  \, , 
\end{align}
with a coupling adjacency matrix $a_{ij}$ defined below in Eq.~\eqref{eq:kcb_bound}.~\cite{Lor07} 
Such models have been investigated mostly numerically and not many of their properties are known analytically. 
Their main feature is that their fixed-point solutions display clustering~\footnote{The literature sometimes refer to this phenomenon as \emph{fragmentation} or \emph{polarization}.} where agents coalesce on a finite number of different opinions. 
Analytically, convergence~\cite{Lor05} and convergence rates~\cite{Zhan15} has been established.
More recently, inhomogeneous models with agent-dependent confidence bounds have been investigated, with a focus on convergence properties.~\cite{Mir12,Cha17,Che20}

Motivated by these earlier works, we extend here the Kuramoto model to account for a finite, homogeneous confidence bound. 
In this initial work on this problem, we focus on a peaked bimodal distribution of natural frequencies. 
We determine and classify fixed-point synchronous solutions.
This is no trivial task: because of the confidence bound, solutions determine the coupling network, which in its turn determines the solutions. 
This self-consistency induces additional nonlinearities in the problem and in particular, it truncates basins of attraction in configuration space, because when phases change, so does the coupling network.
Our main results to be presented below are that (i) we identify all possible fixed-point solutions and determine their general structure, (ii) for most of these solutions, we identify their domain of existence in the model's parameter space, and (iii) we construct the phase diagram of the synchronous fixed points of the model.  
Along the way, (iv) we prove certain theorems on the linear stability of these fixed-point solutions, and (v) analytically derive clusterization in dynamical systems with bounded confidence, which so far had been established only numerically.~\cite{Lor07}

The paper is organized as follows. In Section~\ref{section1} we define the model. 
In Section~\ref{section2} we discuss general properties of its synchronous, fixed-point solutions. 
In particular we argue that they generally are made of several independent clusters.
In Section~\ref{sec:cluster} we characterize the structure of the clusters making up the synchronous states and identify two main types of clusters.
In Section~\ref{sec:clusterings} we come back to the configuration of fixed-points as collections of clusters. 
We identify which partitions of the total number of oscillators can give rise to a stable fixed point. 
Their domain of existence in parameter space is determined by the domain of existence of each of the clusters forming them. 
We discuss the linear stability and domain of existence of these fixed points. 
With these considerations we are able to construct the phase diagram of synchronous solutions to the Kuramoto model with bounded confidence. 
Conclusions are given in Section~\ref{conclusions}.

\section{The model}\label{section1}
We consider a model of $N$ Kuramoto oscillators with a dynamics given by~\cite{Kur84}
\begin{align}\label{eq:kcb_dyn}
 \dot{\theta_i} &= \omega_i - \frac{1}{N}\sum_{j=1}^Na_{ij}\sin(\theta_i-\theta_j) \, , 
\end{align}
where $\theta_i\in\mathbb{S}^1$ is the time-varying phase of the $i$th oscillator and $\omega_i\in\mathbb{R}$ is its natural frequency. 
The original Kuramoto model, Eq.~\eqref{eq:kuramoto}, considered all-to-all coupling with $a_{ij}=K$, $\forall \, i,j=1, \ldots N$. 
Here we extend it to account for a finite confidence bound $\Delta \ge 0$. 
Accordingly, the adjacency matrix elements of the coupling graph depend on phase differences and are defined as
\begin{align}\label{eq:kcb_bound} 
 a_{ij} &= \left\{
 \begin{array}{ll}
  1 \, ,& \text{if } \|\theta_j-\theta_i\| \leq \Delta\,, \\
  0 \, ,& \text{otherwise,}
 \end{array}
 \right. 
\end{align}
where $\|\theta_j-\theta_i\|$ gives the geodesic distance between $\theta_i$ and $\theta_j$ on $\mathbb{S}^1$. 
We depart 
somehow from the models of Eqs.\eqref{eq:kuramoto}--\eqref{eq:confidencebound} and absorb the parameter $K$ by a rescaling of time and of the natural frequencies.
The coupling $a_{ij}$ between oscillators $i$ and $j$ is symmetric and nonzero if and only if the geodesic distance between $\theta_i$ and $\theta_j$ is smaller than $\Delta$. 
We note that for $\Delta \geq \pi$, the original Kuramoto model with all-to-all coupling is recovered.~\cite{Kur84} 
The model defined by Eqs.~\eqref{eq:kcb_dyn} and \eqref{eq:kcb_bound} finds its inspiration in the Deffuant model of opinion dynamics,~\cite{Def00} but deviates from it in three important respects. 
First, the interaction between agents -- oscillators in our case --  is sinusoidal and not linear; second, the phase of all oscillators evolves continuously, as in Ref.~\onlinecite{Heg02}, whereas the Deffuant model updates only a single, randomly selected agent at each time step. 
In this sense, our model is closer to the Hegselmann-Krause model;~\cite{Heg02} third, the agents in our model have an intrinsic dynamics, which is absent in most models of opinion dynamics, with the exception of Ref.~\onlinecite{Plu06}.

In this first investigation of the Kuramoto model with confidence bound, we consider the simplest nontrivial distribution of 
natural frequencies, which is a symmetric peaked bimodal distribution
\begin{align}\label{eq:frq_propre_dist}
 \omega_i &= \left\{
 \begin{array}{ll}
  \omega_0\, , & \text{if } i \in {\bm I}_+\, , \\
  -\omega_0\, , & \text{if } i \in {\bm I}_-\, ,
 \end{array}
 \right.
\end{align}
with $\omega_0 \ge 0$, where $\bm{I}_+$ and $\bm{I}_-$ have the same cardinality and form a partition of the set of oscillators' indices. 
Our model is then characterized by two parameters, the confidence bound $\Delta\in[0,\pi]$ and the splitting $2 \omega_0$ between natural frequencies.

\section{General properties of synchronous states}\label{section2}
Depending on the strength of their diffusive coupling and their initial condition, Kuramoto models have a dynamic that brings them toward synchronous states where all oscillators have the same frequency, $\dot{\theta}_i=\dot{\theta}_j$ for all $i,j$. Our symmetric distribution of natural frequencies has $\sum_{i=1}^N\omega_i=0$, accordingly, all synchronous states are fixed-point solutions to Eqs.~\eqref{eq:kcb_dyn} and \eqref{eq:kcb_bound}, $\dot\theta_i=0$, for all $i$. 
They are therefore determined by their phases, modulo any homogeneous shift $\theta_i \rightarrow \theta_i+C$. 
We denote by $\bm{\theta}^*\in\mathbb{T}^n$ the equivalence class of such fixed-points and from now on we do not distinguish fixed points and their equivalence class. 
When $\|\theta_j^*-\theta_i^* \| \le \Delta $, $\forall~i,j$, the fixed-point synchronous state corresponds to a complete coupling graph, accordingly, it is the same as for the standard Kuramoto model.~\cite{Kur84} 
For that model, it is however well known that the distance between angles in the synchronous state depends on the distribution of natural frequencies, with the distance between angles increasing with the width of the distribution, see e.g., Ref.~\onlinecite{Str00}. 
Therefore, for fixed confidence bound $0<\Delta < \pi/2$, the coupling graph for such a state will no longer be complete, once $\omega_0$ exceeds some threshold value. 
Consequently, such a state may no longer be a fixed-point synchronous state for Eqs.~\eqref{eq:kcb_dyn} and \eqref{eq:kcb_bound}, while simultaneously new, different synchronous states may arise. 
In other words, for a given fixed-point $\bm{\theta}^*$, the coupling matrix $a_{ij}$ of Eq.~\eqref{eq:kcb_bound} depends on $\bm{\theta}^*$ which depends on $a_{ij}$ in its turn. 
This self-consistency is what makes this problem mathematically challenging and interesting. 

Generally speaking, the coupling matrix defined by a fixed-point  may correspond to a connected or to a nonconnected graph. 
In the former case, the problem is that of a Kuramoto model on a complete or non-complete graph, while the latter splits into subproblems, each of them defined by a connected component of the coupling graph. 
For fixed-point synchronous solutions to Eqs.~\eqref{eq:kcb_dyn} and \eqref{eq:kcb_bound}, each connected component corresponds to a set of oscillators that are synchronous and independent of oscillators in other connected components. 
We refer to these connected components as clusters. 
Consider then a fixed-point solution $\bm{\theta}^*$ corresponding to a disconnected coupling graph $a_{ij}$.
The oscillator phases and natural frequencies within each cluster $\Lambda$ of oscillators  must  satisfy
\begin{align}\label{eq:sumcluster}
 \sum_{i\in\Lambda}\omega_i - \frac{1}{N} \sum_{i,j\in\Lambda}a_{ij}\sin(\theta_i^*-\theta_j^*)=0\, ,
\end{align}
and because the coupling matrix is symmetric, the second term on the right-hand side must vanish. 
Hence $\sum_{i \in \Lambda} \omega_i=0$, accordingly, each cluster $\Lambda$ contains the same number of oscillators from $\bm{I}_+$, with natural frequency $\omega_0$ as from $\bm{I}_-$, with natural frequency $-\omega_0$. 
Fixed points therefore form sets of independent clusters, and accordingly correspond to certain partitions of the oscillators.
Because each acceptable partition consists of clusters with as many positive-frequency as negative-frequency oscillators, they are effectively partitions of $N$ oscillators in sets of even cardinality. 
We next discuss in detail what conditions these partitions need to satisfy. 

\begin{figure}
 \centering
 \includegraphics[width=\columnwidth]{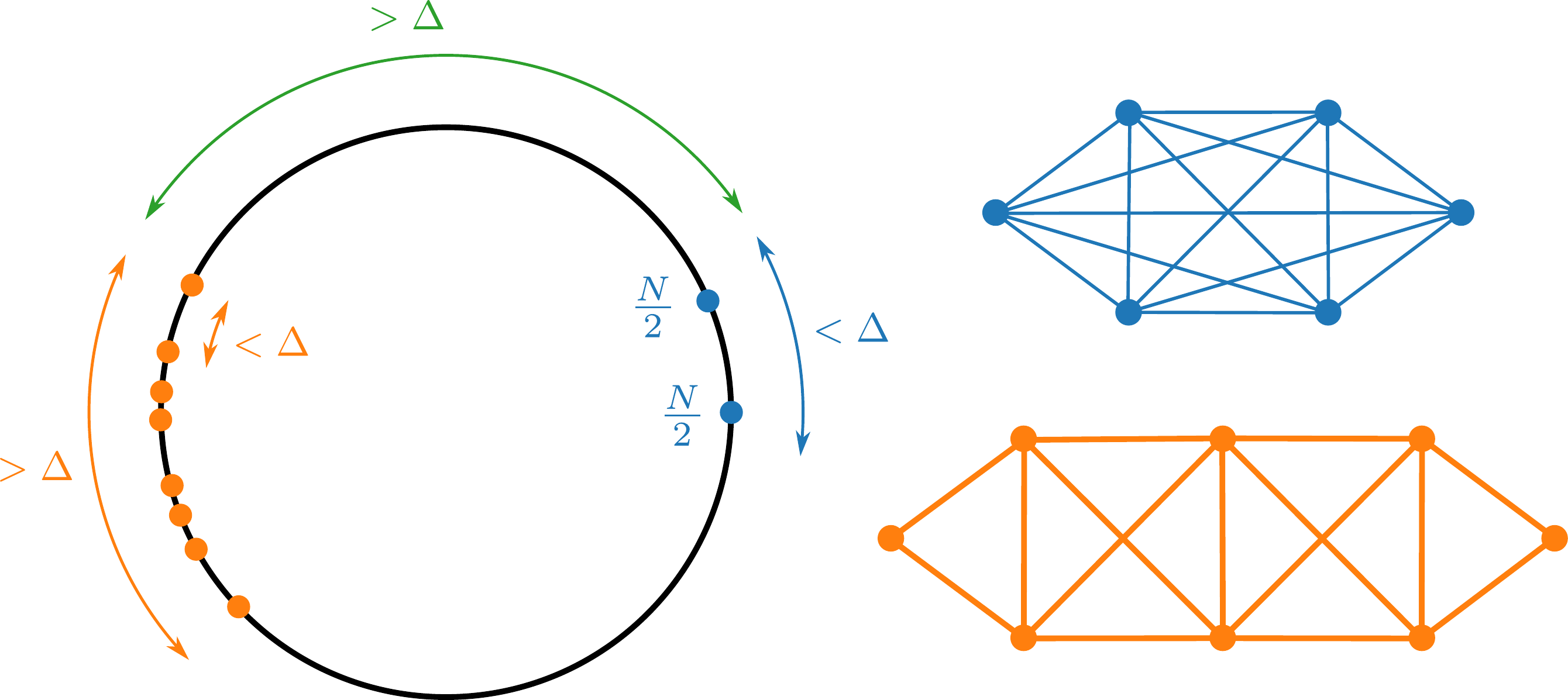}
 \caption{Illustration of the two types of clusters that we distinguish. 
 The blue oscillators all lie within an arc of length smaller than $\Delta$ and are aggregated in two groups of $N/2$ oscillators at the same phase. 
 They are then all connected with each other and form the all-to-all interaction graph illustrated in blue on the top right side. 
 If, within a group of oscillators, some of them are further away than $\Delta$ from each other, but they all belong to a connected interaction graph, then they form a sparse cluster, as illustrated in orange. 
 A fixed point can be formed of one or more clusters, each of them being either all-to-all or sparse. 
 If two clusters coexist, the arc length between their closest oscillators must be larger than $\Delta$ (green arrow). }
 \label{fig:illutration}
\end{figure}

\section{Synchronous states}\label{sec:cluster}
Synchronous states are made of clusters. 
Individual clusters satisfy Eq.~\eqref{eq:sumcluster} and therefore consist of two groups of equal numbers of oscillators, one with positive, the other one with negative natural frequencies. 
One may expect that all oscillators in each of these two groups have the same phases. 
We show that this is indeed the case for all-to-all clusters in stable fixed points in Section~\ref{section:ata}. 
Before that, we first consider such synchronized clusters and determine their domain of existence in the $(\Delta,\omega_0)$ parameter space. 
We next show that these are the only stable synchronous states for clusters with all-to-all coupling. 
Finally, while we cannot rule out the existence of different synchronous states for sparse coupling graphs, we show that the latter are systematically less stable and have a smaller domain of existence than synchronous clusters with all-to-all coupling. 

\subsection{2-group all-to-all clusters with peaked bimodal phases}\label{ssec:reg_ex}
We consider first a cluster with $2n$ oscillators on a complete coupling graph, in a configuration which we call \emph{2-group}, where the $n$ oscillators with natural frequency $\omega_0$ all have the same phase and the same holds for the $n$ oscillators with natural frequency $-\omega_0$. 
Such clusters are depicted in blue in Fig.~\ref{fig:illutration}.
We write
\begin{align}
\label{eq:2-groupconf}
 \theta_i &= \left\{
 \begin{array}{ll}
  \theta_+\, , & i\in \bm{I_+}\, , \\
  \theta_-\, , & i\in \bm{I_-}\, .
 \end{array}
 \right.
\end{align}
Because we assume that the coupling graph is complete, $||\theta_+-\theta_-|| \le \Delta$, and all other clusters have angles further away than a distance $\Delta$ from either $\theta_+$ or $\theta_-$ (green arrow in Fig.~\ref{fig:illutration}). 
From Eq.~\eqref{eq:kcb_dyn} one further has
\begin{align}\label{eq:state_ata}
 \omega_0 &= \frac{n}{N}\sin(||\theta_+-\theta_-||)\, .
\end{align}
Together, these two conditions demarcate the domain of existence of such a cluster in the $(\Delta,\omega_0)$ parameter space,
\begin{align}\label{eq:cc11}
 \omega_0 &\leq \left\{
 \begin{array}{ll}
  n \sin(\Delta)/N\, , & \text{if } \Delta\leq \pi/2\, , \\
  n/N\, , & \text{if } \Delta > \pi/2\, .
 \end{array}
 \right.
\end{align}

\subsection{Linear stability of fixed points made of 2-group clusters}\label{section:2group}
To be attractive, a fixed-point synchronous state needs to be linearly stable, that is, the Jacobian matrix of the coupling term of Eq.~\eqref{eq:kcb_dyn}, 
\begin{align}\label{eq.Jstability}
 \mathcal{J} _{ij} &= \left\{
 \begin{array}{ll}
  a_{ij}\cos(\theta_i^* - \theta_j^*) \, , & \text{if } i\neq j\, , \\
  -\sum_{k\neq i} a_{ik}\cos(\theta_i^* - \theta_k^*)\, , & \text{if } i= j  \, ,
 \end{array}
 \right. 
\end{align} 
is negative (semi-)definite in a neighborhood of the fixed point $\bm{\theta}^*$. 
For a fixed point with multiple clusters, $\cal{J}$ is a block-diagonal weighted Laplacian matrix, each block corresponding to one of its cluster. 
The spectrum of the Jacobian is the union of the spectra of its blocks, and the linear stability of each cluster, determined by the spectrum of each block, can be analyzed independently.

From Eq.~\eqref{eq.Jstability}, each block has the structure of a Laplacian matrix, with zero row and zero column sums of its elements. 
Then one of its eigenvalues vanishes. We order its eigenvalues as $0=\lambda_1 \ge \lambda_2 \geq...\geq\lambda_{2n}$, and the condition for stability is $\lambda_2 < 0$. 
We can write $\cal{J}$ as
\begin{subequations}\label{eq.Jacobian}
 \begin{align}
  \cal{J} &= 
  \begin{pmatrix}
   A & B \\
   B & A
  \end{pmatrix}
  \\
  A &= \bm{1}_{n \times n} - n \, (\cos(||\theta_+-\theta_-||)+1)\mathbb{I}_{n \times n} \\
  B &= \cos(||\theta_+-\theta_-||) \,  \bm{1}_{n \times n}\, ,
 \end{align}
\end{subequations} 
where we introduced the $n \times n$ matrix $(\bm{1}_{n \times n})_{ij}=1$ for all $i, j$ and the $n \times n$ identity matrix $\mathbb{I}_{n \times n}$. 
The following proposition shows that 2-group fixed-point clusters are linearly stable, from which the linear stability of synchronous states made of independent such clusters follows.

\begin{prop}\label{prop:eigen}
 The eigenvalues  of the stability matrix  
 associated with a cluster in the 2-group configuration are $\lambda_1=0$, $\lambda_2=-2n\cos(||\theta_+-\theta_-||)$, and $\lambda_{m\ge 3}=-n[\cos(||\theta_+-\theta_-||)+ 1]$. 
 The corresponding eigenvectors are  
 \begin{subequations}
 \begin{align}
  \bm{u}_1 &= (2n)^{-1/2}(1,...,1) \label{eq.lambda_1}\\
  \bm{u}_2 &= (2n)^{-1/2}(\underbrace{1,...,1}_{n},\underbrace{-1,...,-1}_{n}) \label{eq.lambda_2}\\
  \bm{u}_{m\ge 3} &= 2^{-1/2}(0,...,0,\underbrace{1,-1}_{{\rm at \,  position}\, (x,x+1)},0,...,0) \, , \label{eq.lambda_3} \\ & x \in\{1,...,n-1,n+1,...,2n\}\, \nonumber. 
 \end{align}
 \end{subequations} 

 A cluster in the 2-group configuration is linearly stable if and only if $||\theta_+-\theta_-|| < \pi/2$. 
\end{prop}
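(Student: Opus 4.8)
The plan is to diagonalize the block matrix $\mathcal{J}$ of Eq.~\eqref{eq.Jacobian} by exploiting its $2\times 2$ block structure, and then read off linear stability from the resulting spectrum. Throughout, I would abbreviate $c:=\cos(\|\theta_+-\theta_-\|)$, so that $A=\bm{1}_{n\times n}-n(c+1)\mathbb{I}_{n\times n}$ and $B=c\,\bm{1}_{n\times n}$, both symmetric.

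First I would block-diagonalize. Since $\mathcal{J}$ has the form $\left(\begin{smallmatrix}A&B\\B&A\end{smallmatrix}\right)$, the orthogonal involution $P=\tfrac{1}{\sqrt2}\left(\begin{smallmatrix}\mathbb{I}_{n\times n}&\mathbb{I}_{n\times n}\\\mathbb{I}_{n\times n}&-\mathbb{I}_{n\times n}\end{smallmatrix}\right)$ conjugates it to the block-diagonal matrix $\mathrm{diag}(A+B,\,A-B)$. Hence the spectrum of $\mathcal{J}$ is the union of the spectra of the two $n\times n$ matrices $A+B$ and $A-B$, and its eigenvectors are obtained by lifting those of the blocks through $P$: an eigenvector $v$ of $A+B$ lifts to the symmetric vector $(v,v)$, and an eigenvector $v$ of $A-B$ to the antisymmetric vector $(v,-v)$ (up to the normalization $1/\sqrt2$).

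Next I would compute each block spectrum, which is immediate because both blocks have the rank-one-plus-scalar form $\alpha\,\bm{1}_{n\times n}+\beta\,\mathbb{I}_{n\times n}$, and $\bm{1}_{n\times n}$ acts as $n$ on the all-ones vector and as $0$ on its orthogonal complement. Explicitly, $A+B=(1+c)(\bm{1}_{n\times n}-n\,\mathbb{I}_{n\times n})$ has eigenvalue $0$ on the all-ones vector and $-n(1+c)$ with multiplicity $n-1$, while $A-B=(1-c)\bm{1}_{n\times n}-n(1+c)\mathbb{I}_{n\times n}$ has eigenvalue $-2nc$ on the all-ones vector and $-n(1+c)$ with multiplicity $n-1$. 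Lifting through $P$ reproduces exactly the stated eigenpairs: the all-ones vector in the symmetric sector gives $\lambda_1=0$ with $\bm{u}_1$; the all-ones vector in the antisymmetric sector gives $\lambda_2=-2nc$ with $\bm{u}_2$; and the zero-sum vectors within the two sectors give $\lambda_{m\ge3}=-n(c+1)$ with total multiplicity $2(n-1)$. The localized differences $\bm{u}_{m\ge3}$ of Eq.~\eqref{eq.lambda_3} form a convenient orthogonal basis of this last eigenspace, which one checks in one line: such a vector is supported on a single group and sums to zero there, so $\bm{1}_{n\times n}$ annihilates it and only the $-n(c+1)\mathbb{I}_{n\times n}$ term survives.

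Finally, for stability I would observe that $\lambda_1=0$ is merely the neutral mode of the global shift symmetry $\theta_i\mapsto\theta_i+C$ and is discarded. A genuine cluster has $\|\theta_+-\theta_-\|<\pi$, hence $c>-1$, so the modes $\lambda_{m\ge3}=-n(c+1)$ are always strictly negative; the binding condition is therefore $\lambda_2=-2nc<0$, i.e. $c>0$, which on $[0,\pi]$ is equivalent to $\|\theta_+-\theta_-\|<\pi/2$. Conversely, if $\|\theta_+-\theta_-\|\ge\pi/2$ then $c\le0$ and $\lambda_2\ge0$, so the fixed point fails to be linearly stable. The main point requiring care is this ordering argument: one must confirm that $\lambda_2$, not $\lambda_{m\ge3}$, is the relevant threshold, which follows because $c<1$ for any distinct pair of groups (guaranteed by Eq.~\eqref{eq:state_ata} when $\omega_0>0$) yields $\lambda_2>\lambda_{m\ge3}$, together with keeping the multiplicity bookkeeping straight when the two block spectra are reassembled.
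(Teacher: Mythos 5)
Your proof is correct, but it takes a genuinely different route from the paper's. The paper proceeds by verification: it asserts that the vectors of Eqs.~\eqref{eq.lambda_1}--\eqref{eq.lambda_3} form a complete basis and checks by direct computation that ${\cal J}\bm{u}_m=\lambda_m\bm{u}_m$, then reads off the sign of $\lambda_2$. You instead \emph{derive} the spectrum: conjugation by the orthogonal involution $P$ reduces ${\cal J}$ to $\mathrm{diag}(A+B,\,A-B)$, and since both blocks have the form $\alpha\,\bm{1}_{n\times n}+\beta\,\mathbb{I}_{n\times n}$ their spectra are immediate. This buys you two things. First, no guessing: the eigenvectors emerge from the symmetric/antisymmetric lifting rather than being postulated. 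Second, the multiplicity bookkeeping comes out automatically as $1+1+2(n-1)=2n$, which in fact catches a slip in the paper's statement: the index set in Eq.~\eqref{eq.lambda_3}, $x\in\{1,\ldots,n-1,n+1,\ldots,2n\}$, lists $2n-1$ vectors, one too many, and the vector with $x=2n$ (which would straddle the two groups) is not an eigenvector; your characterization of the $\lambda_{m\ge 3}$ eigenspace as zero-sum vectors supported on a single group correctly gives dimension $2(n-1)$. Two small blemishes in your write-up: (i) the localized difference vectors are not mutually orthogonal (adjacent ones overlap with inner product $-1$), so they form a basis of the eigenspace but not an orthogonal one --- the paper makes the same slip, and it is harmless because the eigenvalue claims need only spanning; (ii) your closing ordering argument (that $c<1$ guarantees $\lambda_2>\lambda_{m\ge 3}$) is not actually needed for the stability criterion: linear stability requires all eigenvalues orthogonal to the shift mode to be negative, and since $-n(c+1)<0$ always holds for a genuine cluster, the ``if and only if'' rests on the sign of $-2nc$ alone, irrespective of which eigenvalue is the largest.
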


\begin{proof}
It is straightforward to see that the eigenmodes given above form a complete orthogonal basis. 
Using Eqs.~\eqref{eq.Jacobian}, a direct calculation shows that ${\cal J}   \bm{u}_m=\lambda_m \bm{u}_m$.

While $\lambda_{m\ge 3}$ is always negative, $\lambda_2$ is negative if and only if $||\theta_+-\theta_-|| <\pi/2$, which concludes the proof. 
\end{proof}

\begin{remark}
Because $\lambda_2$ is the only eigenvalue that can become positive, $\bm{u}_2$ gives the only possible direction of instability.
\end{remark}

We are interested in stable fixed points only. 
From now on, we will restrict to $||\theta_+-\theta_-|| \leq \pi/2$. 
This restriction together with Prop.~\ref{prop:eigen} implies that Eq.~\eqref{eq:state_ata} only has one solution 
\begin{align}\label{eq.phi_cluster}
||\theta_+-\theta_-||&= \arcsin\left(\omega_0\frac{N}{n}\right)\, ,
\end{align}
with cluster sizes $2 n\le N$ and accordingly
\begin{align}\label{eq:l2_ng}
    \lambda_2=-2n\sqrt{1-\big(\omega_0 \frac{N}{n}\big)^2}\, ,
\end{align}
which is valid as long as $\omega_0 N/n \le 1$.

\begin{corollary}
The eigenvalues and eigenvectors given in Prop.~\ref{prop:eigen} are also eigenvalues and -vectors for the stability matrix of an all-to-all Kuramoto 
model defined in Eq.~(\ref{eq:kcb_dyn}) with constant coupling $a_{ij}=1$, for all $i,j$ and the distribution (\ref{eq:frq_propre_dist}) of natural frequencies. 
For that model, Eqs.~(\ref{eq.phi_cluster}) and (\ref{eq:l2_ng}) mean that synchrony is lost when $2 \omega_0 > 1$.
\end{corollary}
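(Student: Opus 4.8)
The plan is to recognize that the standard all-to-all Kuramoto model with the bimodal distribution~\eqref{eq:frq_propre_dist} is exactly the special case in which a single 2-group cluster occupies the whole system, so that Prop.~\ref{prop:eigen} applies verbatim upon setting $n=N/2$. The argument is therefore one of identification rather than of fresh computation.

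First, I would verify that the 2-group configuration of Eq.~\eqref{eq:2-groupconf}, with all $N/2$ oscillators of frequency $\omega_0$ sharing a phase $\theta_+$ and all $N/2$ oscillators of frequency $-\omega_0$ sharing a phase $\theta_-$, is a fixed point of Eq.~\eqref{eq:kcb_dyn} when $a_{ij}=1$ for all $i,j$. Substituting the configuration, the intra-group sine terms vanish while the inter-group terms collapse to a single value, leaving $\omega_0=\tfrac{1}{2}\sin(\|\theta_+-\theta_-\|)$, which is precisely Eq.~\eqref{eq:state_ata} with $n=N/2$.

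Next, I would evaluate the Jacobian~\eqref{eq.Jstability} at this fixed point. Since every pair of oscillators is coupled ($a_{ij}=1$), the matrix entries are computed exactly as in the all-to-all cluster case: the diagonal $n\times n$ block for each group equals $A$ and the off-diagonal block equals $B$ of Eq.~\eqref{eq.Jacobian}, now with $n=N/2$. Hence the stability matrix coincides term-by-term with the one analyzed in Prop.~\ref{prop:eigen}, and its spectrum and eigenvectors are those of Eqs.~\eqref{eq.lambda_1}--\eqref{eq.lambda_3} with $n=N/2$; in particular $\lambda_2=-N\cos(\|\theta_+-\theta_-\|)$.

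Finally, for the loss of synchrony, Eq.~\eqref{eq.phi_cluster} with $n=N/2$ gives $\|\theta_+-\theta_-\|=\arcsin(2\omega_0)$, which admits a real solution only for $2\omega_0\le 1$. Correspondingly $\lambda_2=-N\sqrt{1-(2\omega_0)^2}$ from Eq.~\eqref{eq:l2_ng} vanishes as $2\omega_0\to 1^-$, and for $2\omega_0>1$ no synchronous fixed point survives, so synchrony is lost exactly at $2\omega_0=1$. I expect no real obstacle here; the only point deserving a word of care is that the argument takes the 2-group state as given and matches its Jacobian, rather than reproving that all oscillators within a group must share a common phase, which for the all-to-all model is the classical result recovered independently in Section~\ref{section:ata}.
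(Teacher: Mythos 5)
Your proposal is correct and follows essentially the same route the paper intends for this corollary: the all-to-all bimodal Kuramoto model is identified as the special case of a single 2-group cluster spanning the whole system, so Prop.~\ref{prop:eigen} and Eqs.~\eqref{eq.phi_cluster}--\eqref{eq:l2_ng} apply verbatim with $n=N/2$, giving $\|\theta_+-\theta_-\|=\arcsin(2\omega_0)$ and hence no synchronous fixed point once $2\omega_0>1$. Your closing caveat is also well placed: the claim that \emph{no} stable synchronous state survives (not merely the 2-group one) rests on the result of Section~\ref{section:ata} that stable all-to-all fixed points are necessarily 2-group.
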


%%%%%

\subsection{General all-to-all clusters}\label{section:ata}
We next consider all-to-all clusters without the 2-group condition. 
We show that synchronous fixed points solutions for such clusters are not linearly stable,
and are therefore not relevant for our goal of determining the $(\Delta,\omega_0)$ phase diagram of stable fixed points for the model defined in Eqs.~\eqref{eq:kcb_dyn}--\eqref{eq:frq_propre_dist}.

\begin{theorem}\label{thm:ata}
 Any linearly stable fixed-point solution for clusters of $2n$ oscillators with the dynamics determined by Eqs.~(\ref{eq:kcb_dyn})--(\ref{eq:frq_propre_dist}) and
 with all-to-all interactions is necessarily in the 2-group configuration defined in Eq.~(\ref{eq:2-groupconf}).
\end{theorem}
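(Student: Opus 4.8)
The plan is to collapse the whole problem onto a single cluster and then exploit the Kuramoto mean-field (order-parameter) identity, after which linear stability turns out to be controlled by the diagonal of $\mathcal{J}$ alone. Because the stability matrix is block-diagonal with one block per connected component, it suffices to treat a single all-to-all cluster $\Lambda$ of $2n$ oscillators, where $a_{ij}=1$ throughout. I would first introduce the complex order parameter
\[ R\,e^{\mathrm{i}\psi} = \sum_{j\in\Lambda} e^{\mathrm{i}\theta_j^*}\,, \qquad R\ge 0\,, \]
and rewrite the coupling sum as $\sum_{j\in\Lambda}\sin(\theta_i^*-\theta_j^*)=R\sin(\theta_i^*-\psi)$. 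The fixed-point condition $\dot\theta_i=0$ then reduces to the mean-field relation $\omega_i=(R/N)\sin(\theta_i^*-\psi)$ for every $i\in\Lambda$. For $\omega_0>0$ this forces $R>0$, and since $\omega_i\in\{+\omega_0,-\omega_0\}$, it pins each oscillator to one of only two values of $\sin(\theta_i^*-\psi)$, namely $\pm N\omega_0/R$. Each such equation has two solutions for $\theta_i^*-\psi$: a \emph{near} one with $\cos(\theta_i^*-\psi)>0$ and a \emph{far} one with $\cos(\theta_i^*-\psi)<0$. The entire content of the theorem is then to show that linear stability excludes the far branch.

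This is the key step, and it is short. Using the companion cosine identity $\sum_{k\in\Lambda}\cos(\theta_i^*-\theta_k^*)=R\cos(\theta_i^*-\psi)$ and separating the $k=i$ term, the diagonal entries of the stability matrix \eqref{eq.Jstability} become
\[ \mathcal{J}_{ii} = 1 - R\cos(\theta_i^*-\psi)\,. \]
A linearly stable fixed point requires $\mathcal{J}$ to be negative semidefinite, and every diagonal entry of a negative semidefinite symmetric matrix is non-positive, $\mathcal{J}_{ii}\le 0$. Hence $R\cos(\theta_i^*-\psi)\ge 1$ for all $i\in\Lambda$, which in particular forces $\cos(\theta_i^*-\psi)>0$: no oscillator can sit on the far branch. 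Consequently every $\theta_i^*-\psi$ lies in $(-\pi/2,\pi/2)$, where $\arcsin$ is single-valued, so all positive-frequency oscillators share the common phase $\psi+\arcsin(N\omega_0/R)$ and all negative-frequency ones share $\psi-\arcsin(N\omega_0/R)$. This is exactly the 2-group configuration \eqref{eq:2-groupconf}, which proves the claim.

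The hard part is not the final inequality but justifying the reduction cleanly. I would verify the two order-parameter identities with care and then dispose of the degenerate case $R=0$, which can only occur at $\omega_0=0$ (where $\sum_{j\in\Lambda}e^{\mathrm{i}\theta_j^*}=0$ and $\psi$ is undefined); there one argues separately that such balanced, spread-out states carry a positive Jacobian direction, so the only stable state is full synchrony, itself the 2-group configuration with $\theta_+=\theta_-$. I would also record that the boundary case $\cos(\theta_i^*-\psi)=0$, corresponding to an oscillator exactly at geodesic distance $\pi/2$ from the mean field (i.e.\ $N\omega_0/R=1$), is already ruled out since it gives $\mathcal{J}_{ii}=1>0$. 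Everything else is a one-line computation; the conceptual payoff is that, through the mean-field representation, negative semidefiniteness of $\mathcal{J}$ is detected entirely by its diagonal, and a single far-branch oscillator suffices to break stability.
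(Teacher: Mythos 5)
Your proof is correct, and it reaches the paper's conclusion by a shorter route. The opening is identical to the paper's: both restrict the Kuramoto order parameter to the cluster, reduce the fixed-point condition to $\sin(\theta_i^*-\psi)=N\omega_i/R$, and observe that each oscillator must then sit on either a near branch, $\cos(\theta_i^*-\psi)>0$, or a far branch, $\cos(\theta_i^*-\psi)<0$. The difference lies in how instability of mixed configurations is established. The paper enumerates the four candidate angles, writes $\mathcal{J}$ in block form over the group sizes $n_1,\dots,n_4$, and applies Sylvester's criterion to a $2\times 2$ principal minor coupling two antipodal groups, which evaluates to $-\left[(n_4-n_2)\cos(2\theta^*)+(n_1-n_3)\right]^2\le 0$. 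You use only the $1\times 1$ minors: the identity $\sum_k\cos(\theta_i^*-\theta_k^*)=R\cos(\theta_i^*-\psi)$ gives $\mathcal{J}_{ii}=1-R\cos(\theta_i^*-\psi)$, and negative semidefiniteness forces $R\cos(\theta_i^*-\psi)\ge 1$ for every $i$, which eliminates the far branch at once and yields the 2-group structure without any case analysis. Your version buys real economy and slightly more rigor: it needs no four-group bookkeeping, no case split on $2\theta^*$ versus $\pi/2$, and it is immune to the degenerate situation in which the paper's minor vanishes rather than being strictly negative (this happens exactly when $R\cos\theta^*=0$; excluding it in the paper's framework requires invoking self-consistency of the order parameter, which the paper's proof does not spell out, deferring instead to ``similar arguments''). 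One further remark: the separate treatment you sketch for $R=0$ (possible only at $\omega_0=0$) is unnecessary, because your own diagonal formula already covers it --- if $R=0$ then $\mathcal{J}_{ii}=1>0$ for all $i$ regardless of the undefined $\psi$, so no such state is stable, and the surviving $\omega_0=0$ fixed points with $R>0$ are forced by your inequality into full synchrony, i.e.\ the 2-group configuration with $\theta_+=\theta_-$. The paper itself silently assumes a nonzero order parameter (it divides by $r$), so on this edge case your write-up is the more careful of the two.
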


\begin{proof}
 Consider the standard Kuramoto order parameter~\cite{Kur84}
 \begin{align}\label{eq:oparam}
     re^{i\psi} &\coloneqq \frac{1}{2n}\sum_{j=1}^{2n}e^{i\theta_j}\, ,
 \end{align}
 restricted to the oscillators on the cluster. 
 As the interaction is all-to-all, Eq.~\eqref{eq:kcb_dyn} can be rewritten
 \begin{align}
     \dot{\theta}_i &= \omega_i - r\sin(\theta_i-\psi)\, ,
 \end{align}
 where $\psi$ can be set to zero by a redefinition of all angles, without loss of generality. 
 Once this is done, all phases $\theta_i^*$ of any fixed-point solution must satisfy
 \begin{align}
     \sin(\theta_i^*) &= \frac{\omega_i}{r}\, ,
     \label{eq:proof2groups}
 \end{align}
self-consistently with Eq.~\eqref{eq:oparam}.
 With our choice of natural frequencies $\omega_i=\pm \omega_0$, there are four angles $\theta^*\in[0,\pi/2]$, $\pi-\theta^*$, $-\theta^*$, and $-\pi+\theta^*$ that solve Eq.~\eqref{eq:proof2groups}. 
 Generic fixed-point solutions for clusters of $2n$ oscillators are therefore made of four groups of oscillators of sizes $n_1$, $n_2$, $n_3$, and $n_4$ at those angles. 
 This is illustrated in Fig.~\ref{fig:n1n2n3n4}. 
 
 \begin{figure}
     \centering
     \includegraphics[width=.8\columnwidth]{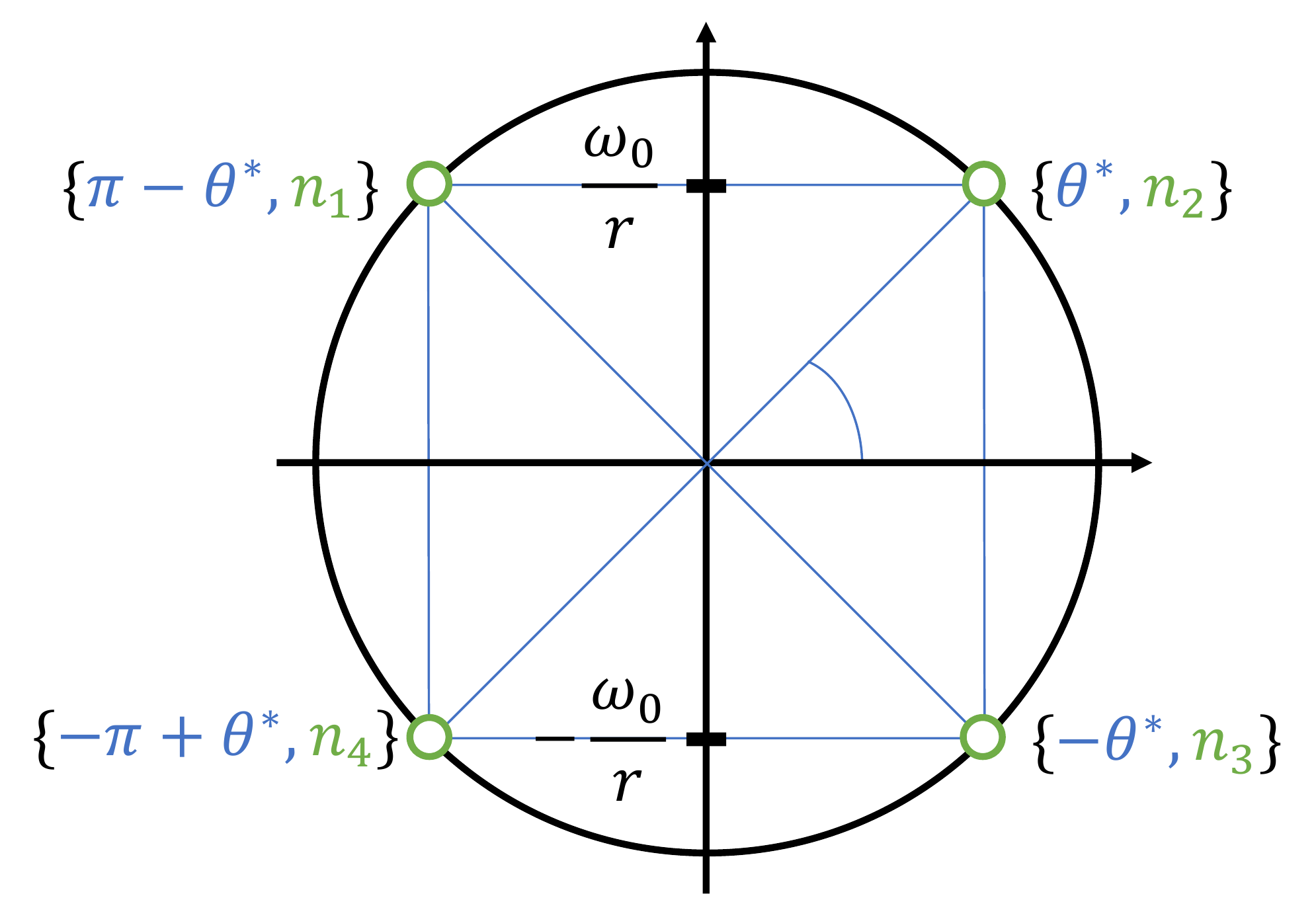}
     \caption{Location of the four solutions of Eq.~\eqref{eq:proof2groups} on the unit circle.}
     \label{fig:n1n2n3n4}
 \end{figure}
 
On can then write the Jacobian matrix in block form as
 \begin{align}\label{eq:jacobian}
     {\cal J} &= 
     \begin{pmatrix}
      D_1 & -c\bm{1}_{n_1\times n_2} & -\bm{1}_{n_1\times n_3} & c\bm{1}_{n_1\times n_4} \\
      -c\bm{1}_{n_2\times n_1} & D_2 & c\bm{1}_{n_2\times n_3} & -\bm{1}_{n_2\times n_4} \\
      -\bm{1}_{n_3\times n_1} & c\bm{1}_{n_3\times n_2} & D_3 & -c\bm{1}_{n_3\times n_4} \\
      c\bm{1}_{n_4\times n_1} & -\bm{1}_{n_4\times n_2} & -c\bm{1}_{n_4\times n_3} & D_4
     \end{pmatrix}\, ,
 \end{align}
 where $D_i=\bm{1}_{n_i\times n_i} + d_i \, \mathbb{I}_{n_i\times n_i}$, $c=\cos(2 \theta^*)$ and 
 \begin{align}
 \begin{split}
     d_1 &= (n_2-n_4)c + (n_3-n_1)\, ,  \\
     d_2 &= (n_1-n_3)c + (n_4-n_2) \, , \\
     d_3 &= (n_4-n_2)c + (n_1-n_3) \, , \\
     d_4 &= (n_3-n_1)c + (n_2-n_4)\, .
 \end{split}
 \end{align}
 
 Eq.~\eqref{eq:jacobian} shows that, unless $n_i n_j=0$ with $(i,j)=(1,3)$ and $(2,4)$, ${\cal J}$ has negative off-diagonal matrix elements, contributing a positive amount 
 to the diagonal matrix elements. 
 We are going to show that the presence of these positive contributions to the diagonal matrix elements results in at least one positive eigenvalue for ${\cal J}$. 
 Therefore the only stable fixed-points for clusters of $2n$ oscillators and all-to-all coupling are those with either $n_2=n_3=0$ or $n_1=n_4=0$, for $0 \le 2 \theta^* \le \pi/2$, or with either $n_1=n_2=0$ or $n_3=n_4=0$, for $\pi \ge 2 \theta^* \ge \pi/2$ since then all angle differences are smaller than $\pi/2$. 
 These are the 2-group cases treated above in Sec.~\ref{section:2group}.
 
 Assume that $n_1>0$ and $n_3>0$ and consider the matrix $-{\cal J}$. 
 Sylvester's criterion states that $-{\cal J}$ is positive semidefinite if and only if all its principal minors are nonnegative.~\cite{Hor86} 
 Take a principal minor as the determinant of the $2 \times 2$ matrix obtained by keeping the matrix elements of $-{\cal J}$ connecting one of the $n_1$ oscillators in the first set to one of the $n_3$ oscillators in the third set of oscillators and the corresponding diagonal elements. 
 One obtains
 \begin{align}
     \left[-{\cal J}\right]_{ij} 
     %&= \left[-1 + (n_4-n_2)c + (n_1-n_3)\right]\left[-1 - (n_4-n_2)c - (n_1-n_3)\right] - 1 \\
     &= -\left[(n_4-n_2)c + (n_1-n_3)\right]^2 \leq 0\, .
 \end{align}
 This minor is negative, hence by Sylvester's criterion $-{\cal J}$ is not positive semidefinite and ${\cal J}$ is not negative semidefinite. 
 The corresponding fixed point is therefore unstable. 
 Similar arguments lead to the same conclusion, unless the 2-group conditions, $n_2=n_3=0$ or $n_1=n_4=0$, for $0 \le 2 \theta^* \le \pi/2$, or with either $n_1=n_2=0$ or $n_3=n_4=0$, for $\pi \ge 2 \theta^* \ge \pi/2$, are satisfied. 
 This concludes the proof. 
\end{proof}

By Theorem~\ref{thm:ata}, the only stable all-to-all clusters are in the 2-group configuration. 
The remaining case of clusters that interact on sparse, incomplete coupling graphs is discussed in the next paragraph.

\subsection{Sparse clusters}
Numerical results show that relatively rare initial conditions for the bimodal Kuramoto model with confidence bound defined in Eqs.~\eqref{eq:kcb_dyn}--\eqref{eq:frq_propre_dist} lead to fixed points where oscillators have clustered phases with a spread exceeding $\Delta$. 
Such clusters are illustrated in orange in Fig.~\ref{fig:illutration}. 
When this is the case, clusters correspond to connected but incomplete graphs. 
We call them \emph{sparse clusters}. 
These clusters are harder to analyze, in particular the analysis leading to Eq.~\eqref{eq:state_ata} and the identification of the domain of existence of all-to-all fixed points cannot be applied to sparse clusters. 
We have been unable to include sparse clusters into the phase diagram for stable fixed points, however we observed numerically that they are rare in the sense that they 
attract only few initial phase configurations. 
Here we shed light on this numerical observation by showing that sparse clusters have (i) smaller domain of existence and (ii) lower stability than all-to-all clusters. 

We first need the following lemma, that relates the angle difference $||\theta_+-\theta_-||$ in all-to-all clusters to the angle difference between oscillators corresponding to the natural frequency $\omega_0$ and those for $-\omega_0$ in sparse clusters. 

\begin{lemma}\label{lem.sinphi}
 Let $\bm{\theta}^*\in\mathbb{R}^{2n}$ be a fixed point for Eqs.~(\ref{eq:kcb_dyn})--(\ref{eq:frq_propre_dist}) in a sparse cluster configuration. 
 Assume that in a certain range of parameters $(\Delta,\omega_0)$ it coexists with a fixed point in the all-to-all configuration, and let $||\theta_+-\theta_-||$ be the angle difference for that fixed point. 
 Then
 \begin{align}
  \sin(||\theta_+-\theta_-||)=\frac{1}{n^2}\sum_{i\in \mathbf{I}_+}\sum_{j\in\mathbf{I}_-}a_{ij}\sin(\theta_i^*-\theta_j^*)\eqqcolon S\, ,
  \label{eq.sinphi}
 \end{align}
 with $a_{ij}$ defined in Eq.~(\ref{eq:kcb_bound}).
\end{lemma}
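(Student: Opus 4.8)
The plan is to produce two expressions for the shared natural frequency $\omega_0$ and equate them. The quantity $S$ is, up to a prefactor, nothing but the net inter-group coupling term acting on the $\omega_0$-oscillators, so the idea is to recover $\omega_0$ from the sparse fixed-point equations by summing them over the positive-frequency group $\mathbf{I}_+$, and then to match the result against the coexisting all-to-all value of $\omega_0$ supplied by Eq.~\eqref{eq:state_ata}.

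First I would write the fixed-point condition $\dot\theta_i^*=0$ from Eq.~\eqref{eq:kcb_dyn} for a single index $i\in\mathbf{I}_+$. Since oscillators outside the cluster sit beyond distance $\Delta$, their $a_{ij}$ vanish by Eq.~\eqref{eq:kcb_bound} and the coupling sum runs only over the $2n$ cluster members; splitting it according to $j\in\mathbf{I}_+$ versus $j\in\mathbf{I}_-$ gives $\omega_0=\frac{1}{N}\big(\sum_{j\in\mathbf{I}_+}+\sum_{j\in\mathbf{I}_-}\big)a_{ij}\sin(\theta_i^*-\theta_j^*)$. Summing this identity over all $n$ indices $i\in\mathbf{I}_+$ produces $n\omega_0$ on the left, while the intra-group double sum $\sum_{i,j\in\mathbf{I}_+}a_{ij}\sin(\theta_i^*-\theta_j^*)$ vanishes identically: the summand is odd under $i\leftrightarrow j$ whereas $a_{ij}=a_{ji}$, so off-diagonal pairs cancel and diagonal terms are zero. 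What survives is exactly the inter-group contribution, $n\omega_0=\frac{1}{N}\sum_{i\in\mathbf{I}_+}\sum_{j\in\mathbf{I}_-}a_{ij}\sin(\theta_i^*-\theta_j^*)=\frac{n^2}{N}\,S$, i.e. $\omega_0=(n/N)\,S$.

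Next I would invoke the coexisting all-to-all fixed point. Because the two solutions occupy the same point of the $(\Delta,\omega_0)$ plane, they share the same $\omega_0$, and Eq.~\eqref{eq:state_ata} gives $\omega_0=(n/N)\sin(||\theta_+-\theta_-||)$. Equating the two expressions and cancelling the common factor $n/N$ yields $\sin(||\theta_+-\theta_-||)=S$, which is precisely Eq.~\eqref{eq.sinphi}.

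The argument is short and the only genuinely necessary care lies in the cancellation step, which fails if one drops either the symmetry of $a_{ij}$ or the oddness of the sine. There is no real analytic obstacle here; the one conceptual point to flag is that the coexistence hypothesis is exactly what legitimizes inserting a single value of $\omega_0$ into both identities, since otherwise the two fixed points would belong to different points of parameter space and the equality would be vacuous. As a consistency check, $S$ comes out nonnegative, as it must since it equals the sine of a geodesic distance: $\omega_0\ge 0$ forces the inter-group sum to be nonnegative, and this also reproduces the relation $\sin(||\theta_+-\theta_-||)=\omega_0 N/n$ already recorded in Eq.~\eqref{eq.phi_cluster}.
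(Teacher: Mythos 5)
Your proof is correct and follows essentially the same route as the paper's: both sum the sparse fixed-point equations over $\mathbf{I}_+$, cancel the intra-group terms using the symmetry of $a_{ij}$ against the antisymmetry of the sine, and invoke coexistence to identify the result with the all-to-all relation $\omega_0 = (n/N)\sin(||\theta_+-\theta_-||)$ of Eq.~\eqref{eq:state_ata}. The only difference is organizational -- you equate the two expressions for $\omega_0$ at the end rather than comparing the fixed-point equations at the start -- so there is nothing to add.
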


\begin{proof}
 Comparing Eq.~\eqref{eq:kcb_dyn} for both fixed points  gives 
 \begin{align}\label{eq:balance}
  n\sin(||\theta_+-\theta_-||) &= \sum_j a_{ij}\sin(\theta_i^*-\theta_j^*)\, ,
 \end{align}
 for all $i\in\bm{I}_+$. 
 Summing Eq.~\eqref{eq:balance} over $i\in \mathbf{I_+}$ and noting that
 $\sum_{i\in \mathbf{I}_+}\sum_{j\in\mathbf{I}_+}a_{ij}\sin(\theta_i^*-\theta_j^*)=0$ by symmetry gives Eq.~\eqref{eq.sinphi} and concludes the proof.
\end{proof}

With Lemma~\ref{lem.sinphi} at hand, we can show that fixed points corresponding to all-to-all clusters have the largest domain of existence in the $(\Delta,\omega_0)$ parameter space.

\begin{theorem}
 For the same number of oscillators $2n$, the domain of existence of stable fixed points in the all-to-all configuration contains the domain of existence of stable fixed points in a sparse configuration. 
\end{theorem}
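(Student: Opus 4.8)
The plan is to show that the mere existence of a sparse cluster at parameters $(\Delta,\omega_0)$ forces $\omega_0$ to lie below the all-to-all existence curve of Eq.~\eqref{eq:cc11}, so that the all-to-all 2-group cluster of the same size $2n$ also exists at that point. The only ingredients are the sparse fixed-point equation together with an elementary bound on the sine of an adjacent phase difference; linear stability of the sparse cluster will not actually be needed for the containment, making the conclusion stronger than the statement requires.

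First I would write the fixed-point condition for each $i\in\mathbf{I}_+$, namely $\omega_0=\tfrac1N\sum_j a_{ij}\sin(\theta_i^*-\theta_j^*)$, sum it over the $n$ indices of $\mathbf{I}_+$, and discard the intra-$\mathbf{I}_+$ contribution, which vanishes by antisymmetry of $\sin$ and symmetry of $a_{ij}$ exactly as in the proof of Lemma~\ref{lem.sinphi}. This yields $\omega_0=(n/N)\,S$ with $S$ the double sum defined in Eq.~\eqref{eq.sinphi}; equivalently, via Lemma~\ref{lem.sinphi}, $S=\sin(\|\theta_+-\theta_-\|)$ is the sine of the angle difference of the all-to-all cluster that would coexist at the same $\omega_0$.

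Next I would bound $S$ from above. Only adjacent pairs contribute, and for every pair with $a_{ij}=1$ one has $\|\theta_i^*-\theta_j^*\|\le\Delta$; hence $\sin(\theta_i^*-\theta_j^*)\le\sin(\|\theta_i^*-\theta_j^*\|)$, since the geodesic distance lies in $[0,\pi]$ and its sine is nonnegative. For $\Delta\le\pi/2$ the monotonicity of $\sin$ on $[0,\Delta]$ gives $\sin(\|\theta_i^*-\theta_j^*\|)\le\sin\Delta$, while for $\Delta>\pi/2$ one simply uses $\sin(\cdot)\le1$. As there are at most $n^2$ pairs $(i,j)\in\mathbf{I}_+\times\mathbf{I}_-$ and negative terms only lower the sum, this gives $S\le\sin\Delta$ when $\Delta\le\pi/2$ and $S\le1$ when $\Delta>\pi/2$.

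Combining $\omega_0=(n/N)S$ with these bounds yields $\omega_0\le n\sin(\Delta)/N$ for $\Delta\le\pi/2$ and $\omega_0\le n/N$ for $\Delta>\pi/2$, which is precisely the all-to-all existence region of Eq.~\eqref{eq:cc11}. Thus every $(\Delta,\omega_0)$ admitting a sparse cluster of $2n$ oscillators also admits the all-to-all 2-group cluster of the same size, establishing the claimed containment. I expect the main obstacle to be the sign bookkeeping in the third step: one must argue cleanly that only the at most $n^2$ adjacent $\mathbf{I}_+$--$\mathbf{I}_-$ pairs contribute to $S$, that each contributes at most $\sin\Delta$, and that the geodesic bound $\|\theta_i^*-\theta_j^*\|\le\Delta$ genuinely controls the \emph{signed} quantity $\sin(\theta_i^*-\theta_j^*)$ and not merely its absolute value; everything else is the direct substitution carried out above.
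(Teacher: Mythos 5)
Your proof is correct, and it is a genuine, if close, variant of the paper's argument. Both proofs hinge on the same identity: summing the sparse fixed-point equations over $\bm{I}_+$ and killing the intra-group terms by symmetry, which gives $\omega_0=(n/N)\,S$ with $S$ as in Eq.~\eqref{eq.sinphi} --- exactly the computation inside Lemma~\ref{lem.sinphi}. The difference is what happens next. The paper works at fixed $\omega_0$ in the $\Delta$ direction: since sparsity forces some $a_{ij}=0$, at least one \emph{connected} pair must have a larger sine, hence a larger angle difference, than the all-to-all value $||\theta_+-\theta_-||$; because $\Delta$ must exceed every connected angle difference, the minimal admissible $\Delta$ for a sparse cluster exceeds that for the all-to-all one. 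You work at fixed $\Delta$ in the $\omega_0$ direction: each connected pair satisfies $\sin(\theta_i^*-\theta_j^*)\le\sin(||\theta_i^*-\theta_j^*||)\le\sin\Delta$ for $\Delta\le\pi/2$ (and $\le 1$ otherwise), so $\omega_0=(n/N)S$ lands below the explicit all-to-all existence curve of Eq.~\eqref{eq:cc11}. Your route buys three things: it never assumes the all-to-all fixed point already coexists with the sparse one --- an assumption that enters the paper's proof through Lemma~\ref{lem.sinphi} --- so it genuinely \emph{derives} the existence of the all-to-all cluster rather than comparing two solutions taken as given; it nowhere uses stability of the sparse cluster, as you note, which strengthens the statement; and it sidesteps the non-monotonicity of $\sin$ on $[0,\pi]$, which the paper's passage from ``larger sine'' to ``larger angle'' must implicitly handle. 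What the paper's formulation buys in exchange is the sharper geometric picture --- a sparse cluster requires a strictly larger confidence bound than the all-to-all one --- which is what makes the containment intuitive. Finally, the obstacle you flagged in your last step is indeed resolved the way you wrote it: $\sin(\theta_i^*-\theta_j^*)=\pm\sin(||\theta_i^*-\theta_j^*||)$ with the geodesic sine nonnegative, so each signed term, as well as each $a_{ij}=0$ term, is dominated by $\sin\Delta\ge0$, and averaging over the $n^2$ pairs gives $S\le\sin\Delta$ as claimed.
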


\begin{proof}
Eq.~\eqref{eq.sinphi} states in particular that at least one angle difference between connected oscillators in the sparse fixed point is larger than the angle difference $||\theta_+-\theta_-||$ in the all-to-all fixed point. 
This is so because sparsity implies that at least one $a_{ij}=0$ on the right-hand side of Eq.~\eqref{eq.sinphi}, while for pairs of connected oscillators one has $a_{ij}=1$. 
For both fixed points, the confidence bound $\Delta$ must exceed the largest angle difference between pairs of connected oscillators. 
Therefore, everything else being fixed, the smallest admissible value of $\Delta$ for a sparse configuration is larger than the smallest admissible $\Delta$ for the all-to-all configuration, i.e., the domain of existence of the sparse fixed point is contained in the domain of existence of the all-to-all fixed point.  
\end{proof}

We next show that fixed points corresponding to all-to-all clusters are more stable than those corresponding to sparse clusters. 

\begin{theorem}
 For the same number of oscillators $2n$, the largest non-vanishing eigenvalue of the Jacobian [Eq.~(\ref{eq.Jstability})] is smaller for the stable fixed point in the all-to-all configuration than for a stable fixed point in a sparse configuration. 
\end{theorem}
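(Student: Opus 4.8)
The plan is to recognize $-\mathcal{J}$ from Eq.~\eqref{eq.Jstability} as a weighted graph Laplacian and to compare the all-to-all and sparse configurations through the variational (Courant--Fischer) characterization of the second-smallest Laplacian eigenvalue. Restricting as always to stable fixed points, the connected cluster is spanned in its kernel by the all-ones vector $\bm{u}_1$ of Eq.~\eqref{eq.lambda_1}, so $-\mathcal{J}$ has eigenvalues $0=\mu_1\le\mu_2\le\dots$ related to those of $\mathcal{J}$ by $\mu_k=-\lambda_k$. The largest non-vanishing eigenvalue of the Jacobian is thus $\lambda_2=-\mu_2$, and the assertion is equivalent to showing that the algebraic connectivity $\mu_2$ is \emph{larger} for the all-to-all cluster than for the sparse one.

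For the all-to-all 2-group cluster, Prop.~\ref{prop:eigen} together with Eq.~\eqref{eq:l2_ng} already gives $\mu_2^{\mathrm{a2a}}=2n\cos(\|\theta_+-\theta_-\|)$ exactly, and one checks $2n\cos(\|\theta_+-\theta_-\|)\le n[\cos(\|\theta_+-\theta_-\|)+1]$ for $\|\theta_+-\theta_-\|<\pi/2$, so this is indeed the smallest non-zero eigenvalue. It therefore suffices to upper-bound $\mu_2^{\mathrm{sparse}}$ by this value. I would do this with the Rayleigh quotient, taking as a trial vector the analog of $\bm{u}_2$ from Eq.~\eqref{eq.lambda_2}, namely $x_i=+1$ on $\bm{I}_+$ and $x_i=-1$ on $\bm{I}_-$, which is orthogonal to $\bm{u}_1$ and hence admissible. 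Since the intra-group differences $x_i-x_j$ vanish, only the cross-group edges survive in $x^\top(-\mathcal{J})x=\sum_{i<j}a_{ij}\cos(\theta_i^*-\theta_j^*)(x_i-x_j)^2$, yielding $\mu_2^{\mathrm{sparse}}\le x^\top(-\mathcal{J})x/x^\top x = 2n\,C$, where $C\coloneqq n^{-2}\sum_{i\in\bm{I}_+}\sum_{j\in\bm{I}_-}a_{ij}\cos(\theta_i^*-\theta_j^*)$ is the cosine counterpart of the quantity $S$ appearing in Lemma~\ref{lem.sinphi}.

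It then remains to show $C\le\cos(\|\theta_+-\theta_-\|)$, and this is where Lemma~\ref{lem.sinphi} enters. Forming the complex combination $C+iS=n^{-2}\sum_{i\in\bm{I}_+,\,j\in\bm{I}_-}a_{ij}\,e^{i(\theta_i^*-\theta_j^*)}$ and applying the triangle inequality gives $C^2+S^2=|C+iS|^2\le\big(n^{-2}\sum a_{ij}\big)^2\le 1$, since at most $n^2$ cross-pairs are connected. With $S=\sin(\|\theta_+-\theta_-\|)$ supplied by Eq.~\eqref{eq.sinphi}, this forces $C^2\le\cos^2(\|\theta_+-\theta_-\|)$, hence $C\le|C|\le\cos(\|\theta_+-\theta_-\|)$ because $\cos(\|\theta_+-\theta_-\|)\ge0$ in the stable regime. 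Combining the estimates gives $\mu_2^{\mathrm{sparse}}\le 2nC\le 2n\cos(\|\theta_+-\theta_-\|)=\mu_2^{\mathrm{a2a}}$, i.e. $\lambda_2^{\mathrm{a2a}}\le\lambda_2^{\mathrm{sparse}}$, which is the claim; strictness follows whenever at least one cross-pair is disconnected, so that $n^{-2}\sum a_{ij}<1$ and the bound $C^2+S^2<1$ is strict.

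The main obstacle I anticipate is not the Rayleigh-quotient estimate itself but keeping the comparison well-posed: one must compare the all-to-all and sparse fixed points precisely in the regime where they genuinely coexist, so that Lemma~\ref{lem.sinphi} furnishes a common reference value $S=\sin(\|\theta_+-\theta_-\|)$, and one must confirm that $\bm{u}_2$ remains an admissible trial vector for the sparse Laplacian even though the sparse cluster's connectivity and eigenvectors differ from the all-to-all ones. A secondary point is to ensure the triangle-inequality step does not rely on sign control of individual cross-edge cosines; since it uses only $|e^{i(\theta_i^*-\theta_j^*)}|=1$, the argument goes through without any such assumption.
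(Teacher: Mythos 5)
Your proposal is correct and follows essentially the same route as the paper's own proof: a Rayleigh-quotient (Courant--Fischer) bound on the sparse Jacobian using the 2-group eigenvector $\bm{u}_2$ as trial vector, reduction of the claim to $C \le \cos(\|\theta_+-\theta_-\|)$ via Lemma~\ref{lem.sinphi}, and the bound $C^2+S^2\le 1$ --- your complex triangle-inequality step $|C+iS|\le n^{-2}\sum a_{ij}\le 1$ is just a repackaging of the paper's direct expansion of $C^2+S^2$ into a double sum of cosines. The only nuance is your strictness criterion (a disconnected cross-pair): the paper instead notes that $C^2+S^2=1$ forces \emph{all} connected cross-pair phase differences to coincide, i.e.\ the all-to-all 2-group configuration, which also rules out equality for a sparse cluster whose missing edges are all within-group, a case your stated criterion does not cover (though equality in your own triangle inequality would exclude it by the same argument).
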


\begin{proof}
Let ${\cal J}$ and ${\cal J}'$ be the Jacobian matrices of Eq.~\eqref{eq:kcb_dyn} corresponding to the fixed points respectively in all-to-all and sparse configurations. 
Let $0=\lambda_1 > \lambda_2 \geq ... \geq \lambda_{2n}$, $\bm{u}_1$,...,$\bm{u}_{2n}$ and $0=\lambda_1' > \lambda_2' \geq ... \geq \lambda_{2n}'$, $\bm{u}_1'$,...,$\bm{u}_{2n}'$ be their eigenvalues and eigenvectors. 
In particular, the eigenvalues of ${\cal J}$ are given in Prop.~\ref{prop:eigen} as a function of the phase difference $||\theta_+-\theta_-||$ [see Eq.~\eqref{eq.phi_cluster}]
between oscillators with positive and negative natural frequencies.
The phase coordinates of the sparse fixed point are $\bm{\theta}^*\in\mathbb{R}^{2n}$.

Because ${\cal J}'$ is real symmetric, each of the above two sets of eigenvectors form an orthonormal basis of $\mathbb{R}^{2n}$. 
This implies that for any vector $\bm{v}$ such that $\bm{v} \perp \bm{u}_1=\bm{u}_1'=(2n)^{-1/2}(1,...,1)$, 
\begin{align}
 \frac{\bm{v}^\top \J'\bm{v}}{||\bm{v}||^2}\leq \lambda_2'\,.
 \label{eq.prooflambda2}
\end{align}
In particular, this is true for $\bm{v}=\bm{u}_2$. 
Thus if we can prove that 
\begin{align}\label{eq.proofstability}
 \bm{u}_2^\top \J'\bm{u}_2 &> \lambda_2\, ,
\end{align}
then we simultaneously prove that $\lambda_2' \ge \lambda_2$.

With Eqs.~\eqref{eq.Jstability} and \eqref{eq.lambda_2},
a straightforward calculation shows that 
\begin{align}\label{eq:fin_sum}
 \bm{u}_2^\top\J'\bm{u}_2 &= -\frac{2}{n}\sum_{i\in\bm{I}_+}\sum_{j\in\bm{I}_-}a_{ij}\cos(\theta_i^*-\theta_j^*) =: -2 n C \, . \end{align}
Because we consider stable fixed point, ${\cal J}'$ is negative semidefinite, and therefore $C \ge 0$. 
Combining Eq.~\eqref{eq:fin_sum} and the expression of $\lambda_2$ in Prop.~\ref{prop:eigen}, proving Eq.~\eqref{eq.proofstability} is equivalent to proving
\begin{align} \label{eq.Cdef}
 \cos(||\theta_+-\theta_-||) &> \frac{1}{n^2}\sum_{i\in\mathbf{I_+}}\sum_{j\in\mathbf{I_-}}a_{ij}\cos(\theta_i^*-\theta_j^*) = C\, .
\end{align}
Because $C \ge 0$, this inequality can be rewritten as
\begin{align} \label{eq.proofstability2}
 C^2 + S^2 &< 1 \, ,
\end{align}
where $S$ is defined in Eq.~\eqref{eq.sinphi}. Eq.~\eqref{eq.proofstability2} is finally verified 
by direct computation,
\begin{align}\label{eq:c2s2}
C^2 + S^2
% &= \left[\sum_{\substack{i_1\in\bm{I}_+ \\ j_1\in\bm{I}_-}} a_{i_1j_1}\cos(\theta_{i_1}-\theta_{j_1})\right] 
% \left[\sum_{\substack{i_2\in\bm{I}_+ \\ j_2\in\bm{I}_-}} a_{i_2j_2}\cos(\theta_{i_2}-\theta_{j_2})\right] \\ 
% &+ \left[\sum_{\substack{i_1\in\bm{I}_+ \\ j_1\in\bm{I}_-}} a_{i_1j_1}\sin(\theta_{i_1}-\theta_{j_1})\right] 
% \left[\sum_{\substack{i_2\in\bm{I}_+ \\ j_2\in\bm{I}_-}} a_{i_2j_2}\sin(\theta_{i_2}-\theta_{j_2})\right] \\
% &= \sum_{\substack{i_1,i_2\in\bm{I}_+ \\ j_1,j_2\in\bm{I}_-}} a_{i_1j_1}a_{i_2j_2} 
% \left[\cos(\theta_{i_1} - \theta_{j_1}) \cos(\theta_{i_2} - \theta_{j_2}) 
% + \sin(\theta_{i_1} - \theta_{j_1}) \sin(\theta_{i_2} - \theta_{j_2})\right] \\
 &=  n^{-4} \sum_{\substack{i,k\in\bm{I}_+ \\ j,l\in\bm{I}_-}} a_{ij}a_{kl} 
 \cos\left[(\theta_{i}^* - \theta_{j}^*) - (\theta_{k}^* - \theta_{l}^*)\right]\, ,
\end{align}
which is necessarily smaller than $1$. 
By equivalence of Eqs.~\eqref{eq.proofstability2}, \eqref{eq.Cdef}, and \eqref{eq.proofstability}, we conclude that $\lambda_2' \ge \lambda_2$, which concludes the proof.
\end{proof}

\begin{remark}
Eq.~\eqref{eq:c2s2} takes value $1$ only for the all-to-all cluster, for which $a_{ij}=1$, and $(\theta_{i}^* - \theta_{j}^*) - (\theta_{k}^* - \theta_{l}^*)=0$
for $i,k \in \bm{I}_+$ and $j,l\in\bm{I}_-$. 
\end{remark}

This concludes our discussion of independent clusters of oscillators in stable fixed points. 
Fixed points of the Kuramoto model with confidence bound, defined in Eqs.~\eqref{eq:kcb_dyn}--\eqref{eq:frq_propre_dist}, are formed of independent clusters. 
We know from numerical data that some of these fixed points are made of sparse clusters, but most of them are made of all-to-all clusters within which all pairs of oscillators are coupled. 
For the latter, since oscillators come in pairs of as many oscillators with positive as with negative natural frequencies, stable fixed points are characterized by only two phase values, depending only on the two natural frequencies in our bimodal Kuramoto model. 
The difference between the two phase values is given in Eq.~\eqref{eq.phi_cluster}. 
This allows us to demarcate the domain of existence of these fixed points and to compute the value of the largest negative eigenvalue of the corresponding stability matrix, giving a measure of the fixed point linear stability. 

Unfortunately, we are not able at this stage to demarcate the domain of existence of sparse clusters, nor to evaluate the largest eigenvalue of their stability matrix. 
Still we are able to prove that their domain of existence in the $(\Delta,\omega_0)$ parameter space is smaller and that they are less stable, in the sense of the largest negative eigenvalue of the stability matrix, than all-to-all clusters with the same number of oscillators. 
This explains our numerical observation that fixed points with sparse clusters are rare. 
From here on we focus on all-to-all clusters.

\section{Partitions}\label{sec:clusterings}
Fixed points are in general made of several clusters and accordingly correspond to partitions of the total number $N$ of oscillators in the system. 
Because of our choice of a peaked bimodal distribution of natural frequencies, $\omega_i = \pm \omega_0$, we know that each cluster in this partition consists of an even number $2 n_\alpha$ of oscillators, with $\alpha$ indexing the clusters. 
A fixed point is therefore represented by a partition of its $N$ oscillators into subsets $\Lambda_\alpha$, with $\sum_\alpha 2 n_\alpha = N$. 
We now consider such partitions into all-to-all clusters and evaluate their domain of stability. 

\subsection{Linear stability}
The spectrum of the Jacobian / stability matrix for a fixed point made of $N_c$ independent clusters is given by the union of the spectra of the Jacobians of  each 
independent cluster. 
The linear stability of the fixed point can then be measured by the largest eigenvalue $\lambda_2^{(\alpha)}$ corresponding to its 
least stable cluster $\alpha$. Eq.~\eqref{eq:l2_ng} states that this is the cluster with smallest number $n_\alpha$ of oscillators. To compare the stability
of two coexisting but different fixed points, one therefore looks for the smallest clusters with different sizes in both partitions. 
A numerical example will be given for the case of $N=12$ oscillators, where the fixed point with a single cluster of $2n=N$ oscillators is the most stable, 
while the partition with six clusters of $2n=2$ oscillators is the least stable.

\subsection{Domain of existence}
Eq.~\eqref{eq:cc11} demarcates the domain of existence of a single cluster. 
A fixed point corresponds to a partition into $N_c$ clusters, each of them having its limited domain of existence -- it must be able to exist individually -- and each cluster being sufficiently far away from any other one so as to not interact with it. 

The condition that each cluster must be able to exist independently limits the value of $\omega_0$ with respect to the smallest cluster size as
\begin{align}\label{eq:CC1}
 \omega_0 &\leq \left\{
 \begin{array}{ll}
  \min_\alpha [f_{\alpha}] \sin(\Delta)\, , & \text{if } \Delta\leq \frac{\pi}{2}\, , \\
  \min_\alpha [f_{\alpha}] \, , & \text{if } \Delta > \frac{\pi}{2}\, ,
 \end{array}
 \right. 
\end{align}
where $2 f_\alpha = 2 n_\alpha/N \in[0,1]$ gives the fraction of oscillators in cluster $\alpha$. 
The condition that each cluster does not interact with any other one reads 
\begin{align}\label{eq:cc2}
 \min_{i\neq j}D(\Lambda_\alpha,\Lambda_\beta) &> \Delta\, , & \forall& \alpha, \beta \, , 
\end{align}
where  the distance between two clusters labeled $\Lambda_\alpha$ and $\Lambda_\beta$ is
\begin{align}\label{eq:cluster_dist}
 D(\Lambda_\alpha,\Lambda_\beta) &\coloneqq \min_{i\in \Lambda_\alpha, j \in \Lambda_\beta} ||\theta_i^*-\theta_j^*|| \, .
\end{align}
Eq.~\eqref{eq:l2_ng} shows that the phase difference $||\theta_{\alpha,+}^*-\theta_{\alpha,-}^*||$ giving the angular spread of each cluster increases with $\omega_0$ so that the maximal value for $\omega_0$ still accepting the coexistence of a partition is given by  
\begin{align}\label{eq:cc_homog}
 N_{c}^{-1}\left(2\pi-\sum_{\alpha=1}^{N_c} ||\theta_{\alpha,+}^*-\theta_{\alpha,-}^*|| \right) &> \Delta\, .
\end{align}
With Eq.~\eqref{eq.phi_cluster} we finally get a second condition for the domain of existence in the $(\Delta,\omega_0)$ parameter space 
of the fixed point with $N_c$ clusters,
\begin{align}\label{eq:CC2}
 N_{c}^{-1}\left(2\pi-\sum_{\alpha=1}^{N_c} \arcsin\left(\omega_0/f_\alpha\right)\right) &> \Delta\, .
\end{align}

Eqs.~\eqref{eq:CC1} and \eqref{eq:CC2}, together with the general stability condition $||\theta_{\alpha,+}^*-\theta_{\alpha,-}^*||\leq \pi/2$, $\alpha=1,...,N_c$
demarcate the domain of existence of a fixed point characterized by a given partition. 
Fig.~\ref{fig:EA} shows the phase diagram for the existence of different fixed points characterized by different partitions in the $(\Delta,\omega_0)$
parameter space for a system of $N=12$ oscillators. 

Our findings are valid in the limit $\omega_0\searrow 0$, where the two groups forming each cluster coalesce. 
The bottom line of Fig.~\ref{fig:EA} shows the multiplicity of possible clusterings with respect to $\Delta$ in the limit $\omega_0\searrow 0$. 
It gives an analytical description of the $\Delta$-dependent clustering that has been observed numerically in similar models.~\cite{Lor07} 

\begin{figure}
 \centering
 \includegraphics[width=8cm]{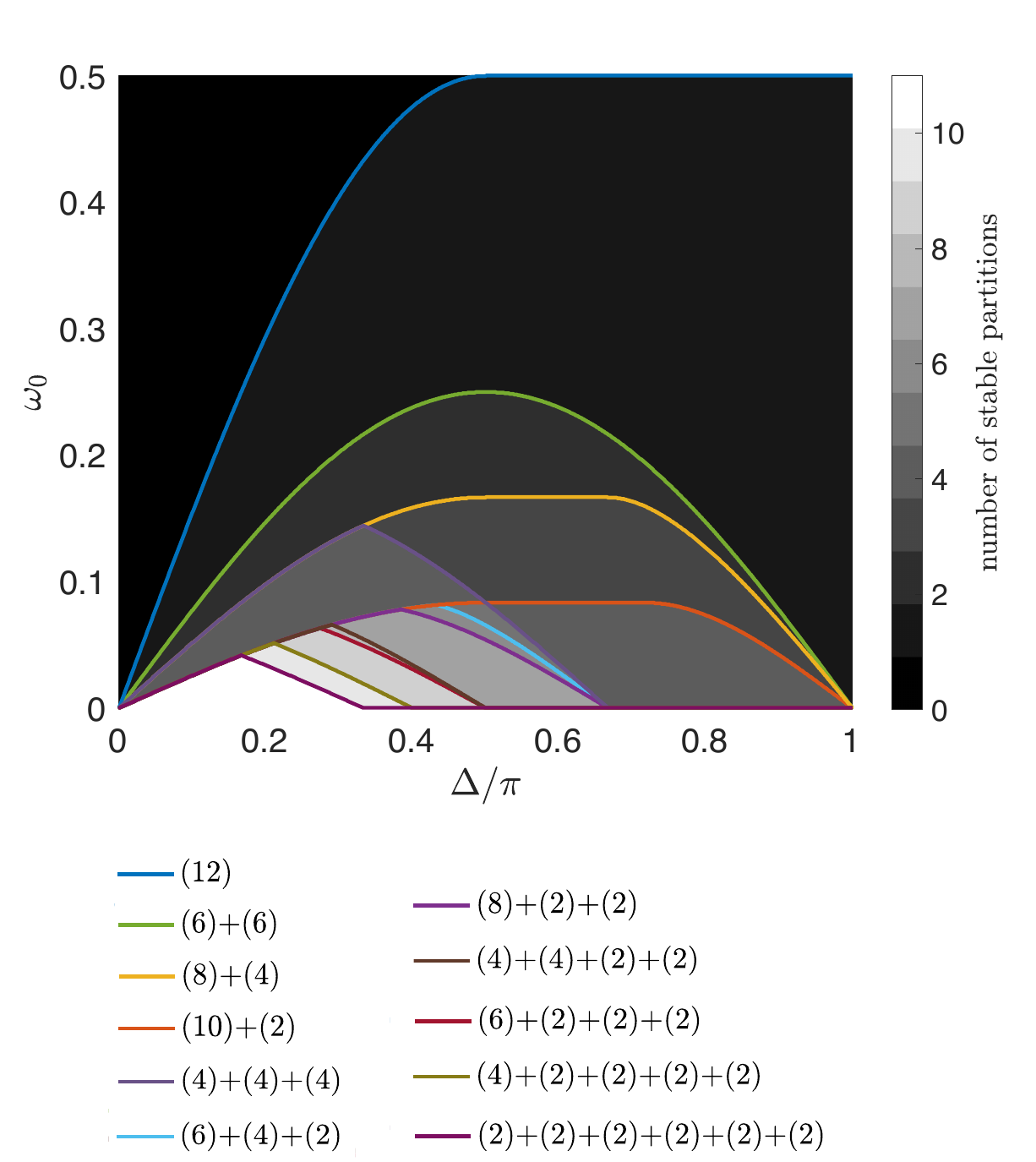}
 \caption{Domains of existence of the stable fixed points and the corresponding partitions of the bimodal Kuramoto model with confidence bound defined in Eqs.~\eqref{eq:kcb_dyn}--\eqref{eq:frq_propre_dist} with $N=12$ oscillators. 
 Each colored line gives the upper boundary of the existence domain of the corresponding partition. 
 The grey scale indicates the number of different possible fixed points in the 2-group configuration for each area. }
 \label{fig:EA}
\end{figure}

A final result allows one to anticipate which partition has the largest domain of existence, without resorting to a direct calculation of the boundaries of the domains. 

\begin{prop}\label{prop:largest domain of existence}
For a fixed number of clusters $N_c$, the fixed point of the bimodal Kuramoto model with confidence bound defined in Eqs.~(\ref{eq:kcb_dyn})--(\ref{eq:frq_propre_dist}),
corresponding to the partition where all clusters have the same number of oscillators, has the largest domain of existence. 
\end{prop}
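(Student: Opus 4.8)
The plan is to show that the domain of existence of the \emph{balanced} partition, the one with $f_\alpha=1/(2N_c)$ for every $\alpha$, contains the domain of existence of any other partition into $N_c$ clusters; this is the strongest reading of ``largest domain'' and immediately implies the statement. A partition into $N_c$ all-to-all clusters is admissible precisely on the set of $(\Delta,\omega_0)$ satisfying Eqs.~\eqref{eq:CC1} and \eqref{eq:CC2} together with the stability bound $\|\theta_{\alpha,+}^*-\theta_{\alpha,-}^*\|=\arcsin(\omega_0/f_\alpha)\le\pi/2$. Since the latter holds automatically whenever the $\arcsin$ is defined, it suffices to compare the two nontrivial conditions. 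Both depend on the cluster sizes only through a single symmetric functional of the fractions $\{f_\alpha\}$, constrained by $\sum_\alpha f_\alpha=1/2$ with exactly $N_c$ terms. I would therefore fix $N_c$ and a point $(\Delta,\omega_0)$ and argue that each functional is most permissive at the balanced configuration.

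First I would dispatch Eq.~\eqref{eq:CC1}, which caps $\omega_0$ through $\min_\alpha f_\alpha$, the bound increasing with $\min_\alpha f_\alpha$ in either regime of $\Delta$. Among positive reals of fixed sum $1/2$ the minimum is maximized exactly when all entries coincide, giving $\min_\alpha f_\alpha=1/(2N_c)$; hence the $\omega_0$-ceiling from \eqref{eq:CC1}, and the admissible region it defines, is widest for the balanced partition. This step is elementary.

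The decisive step is Eq.~\eqref{eq:CC2}, which I would rearrange as $\sum_\alpha\arcsin(\omega_0/f_\alpha)<2\pi-N_c\Delta$ and which is therefore satisfied on the largest region when the sum $\Phi(\{f_\alpha\}):=\sum_\alpha\arcsin(\omega_0/f_\alpha)$ is smallest. I would prove that, at fixed $\omega_0$, $\Phi$ is minimized by the balanced partition. The key computation is that $x\mapsto\arcsin(\omega_0/x)$ is convex on $(\omega_0,\infty)$: a direct differentiation gives the second derivative $\omega_0(2x^2-\omega_0^2)\,\big[x^2(x^2-\omega_0^2)^{3/2}\big]^{-1}$, which is strictly positive for $x>\omega_0>0$. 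The argument $x=f_\alpha>\omega_0$ is guaranteed throughout the domain, since \eqref{eq:CC1} forces $\omega_0\le f_\alpha$ (via $\omega_0\le f_\alpha\sin\Delta\le f_\alpha$ for $\Delta\le\pi/2$, and $\omega_0\le\min_\alpha f_\alpha$ otherwise). Convexity and Jensen's inequality then yield $\Phi(\{f_\alpha\})\ge N_c\arcsin(2N_c\omega_0)$, with equality exactly for the balanced partition, so $\Phi$ is minimal there and \eqref{eq:CC2} is most permissive for that partition.

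Combining the two, any $(\Delta,\omega_0)$ meeting \eqref{eq:CC1} and \eqref{eq:CC2} for an arbitrary $N_c$-cluster partition also meets them for the balanced one, so the latter's domain of existence contains the former's, which establishes the claim. I expect the convexity estimate to be the only genuine obstacle; the remainder is monotone bookkeeping. One subtlety I would flag is integrality: a perfectly balanced partition exists only when $2N_c$ divides $N$. Since $\Phi$ is separable convex, hence Schur-convex, while $\min_\alpha f_\alpha$ is Schur-concave, the same comparison extends to the majorization order, so among admissible integer partitions the most balanced one is still optimal.
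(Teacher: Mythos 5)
Your proposal is correct, and it follows the paper's overall structure: the existence domain is cut out by the two constraints \eqref{eq:CC1} and \eqref{eq:CC2}, the first is handled by noting that $\min_\alpha f_\alpha$ is maximized by the balanced partition, and the second by showing that the balanced partition minimizes $\sum_\alpha \arcsin(\omega_0/f_\alpha)$. Where you genuinely diverge is in the key step for \eqref{eq:CC2}: you establish convexity of the composite map $x\mapsto\arcsin(\omega_0/x)$ on $(\omega_0,\infty)$ by an explicit second-derivative computation and then apply Jensen once, whereas the paper applies Jensen only to $\arcsin$ (convex on $[0,1]$), obtaining $N_c^{-1}\sum_\alpha\arcsin(\omega_0/f_\alpha)\ge\arcsin\bigl(\omega_0 N_c^{-1}\sum_\alpha f_\alpha^{-1}\bigr)$, and then invokes the harmonic--arithmetic mean inequality to bound $\sum_\alpha f_\alpha^{-1}\ge 2N_c^2$; both routes land on the same bound $\sum_\alpha\arcsin(\omega_0/f_\alpha)\ge N_c\arcsin(2N_c\omega_0)$. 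The paper's route uses only textbook facts and avoids differentiation; yours is more compact and buys two things the paper does not state: strict convexity gives equality \emph{iff} all $f_\alpha$ coincide, so the balanced partition is the unique minimizer, and your Schur-convexity remark cleanly handles the integrality issue (a perfectly balanced partition exists only when $2N_c$ divides $N$), which the paper only acknowledges implicitly via the phrase ``if it exists'' in the surrounding text. One small point worth making explicit in your write-up: Jensen requires the mean value $N_c^{-1}\sum_\alpha f_\alpha = 1/(2N_c)$ to lie in the convexity domain $[\omega_0,\infty)$, but this is immediate since the mean dominates $\min_\alpha f_\alpha\ge\omega_0$, which you already established from \eqref{eq:CC1}.
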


\begin{proof}
First, Eq.~\eqref{eq:CC1} states that, to minimize the constraints on  $\omega_0$ vs. $\Delta$, the smallest cluster has to be as large as possible. 
This is obtained for an homogeneous size of the clusters with $f_\alpha=1/2N_c$ for all $\alpha$. 
The region delimited by Eq.~\eqref{eq:CC1} is largest for equally-sized clusters. 

Second, to maximize the domain of existence delimited by Eq.\eqref{eq:CC2}, we want the clusters to take as little space as possible on the circle. 
From Eq.~\eqref{eq.phi_cluster}, and using the convexity of arcsine, one has
\begin{align}\label{eq:bound}
N_c^{-1} \sum_{\alpha=1}^{N_c} ||\theta_{\alpha,+}^*-\theta_{\alpha,-}^*|| & \ge 
\arcsin\left(\omega_0 N_{c}^{-1} \sum_{\alpha=1}^{N_{c}} \frac{1}{f_\alpha} \right) \, .
\end{align}
We further use the inequality between harmonic and arithmetic means,~\cite{Bullen03}
\begin{align}
    N_{c}\left(\sum_{\alpha=1}^{N_{c}}\frac{1}{f_\alpha}\right)^{-1} \le  N_{c}^{-1} \left(\sum_{\alpha=1}^{N_{c}}f_\alpha \right) \, ,
\end{align}
to rewrite Eq.~\eqref{eq:bound} as 
\begin{align}\label{eq:bound2}
N_c^{-1} \sum_{\alpha=1}^{N_c} ||\theta_{\alpha,+}^*-\theta_{\alpha,-}^*|| & \ge \arcsin(2 \omega_0 N_{c})\, .
\end{align}
The final step is to realize that the right-hand-side of Eq.~\eqref{eq:bound2} gives the average cluster angle in the partition with equal-size clusters. 
Eq.~\eqref{eq:bound} states that it minimizes the average cluster angle, which concludes the proof.
\end{proof}

The phase diagram depends on partitions of the number of oscillators into clusters with even number of oscillators and
therefore depends on $N$. 
Because Eqs.~\eqref{eq:CC1} and \eqref{eq:CC2} depend on the half-fraction $f_\alpha$ of oscillators  
in each cluster, and not on $N$, we can nevertheless extract general, $N$-independent properties of the phase diagram
and of fixed-point stability.

First, a characteristic feature of all but one domain of existence is the nonmonotonicity of their upper demarcation line, which increases first to reach a maximal value of $\omega_0$ to then decrease, except for the single-cluster fixed point. 
This is easily understood for homogeneous fixed-points with equal-size clusters, which we explain here. 
In this case, the demarcation line rises for $0\leq\Delta\leq\Delta_{\max}$, following the condition given in Eq.~\eqref{eq:CC1} and reaches its maximal value for $\omega_0$ at $\Delta = \Delta_{\rm max}: = \frac{2\pi}{N_c}-\frac{\pi}{2}$. 
Then, for $\Delta_{\rm max}<\Delta\leq \frac{2\pi}{N_c}$, the domain is constrained by Eq.~\eqref{eq:CC2} and the demarcation curve goes down again in order to satisfy the equivalent relation 
\begin{align}\label{eq:explicitCC2}
 \omega_0 &< \frac{1}{2 N_c} \, \sin\left(\frac{2\pi}{N_c}-\Delta\right)\, .    
\end{align}
For the single-cluster fixed-point with $N_c=1$, however, $\Delta_{\rm max}>\pi$ and therefore, the demarcation curve remains
constant at $\omega_0=1/2$. 
This behavior is  illustrated in Fig. \ref{fig:CC1CC2} for $N_c=1,2$ and 3.

\begin{figure}
 \centering
 \includegraphics[width=.8\columnwidth]{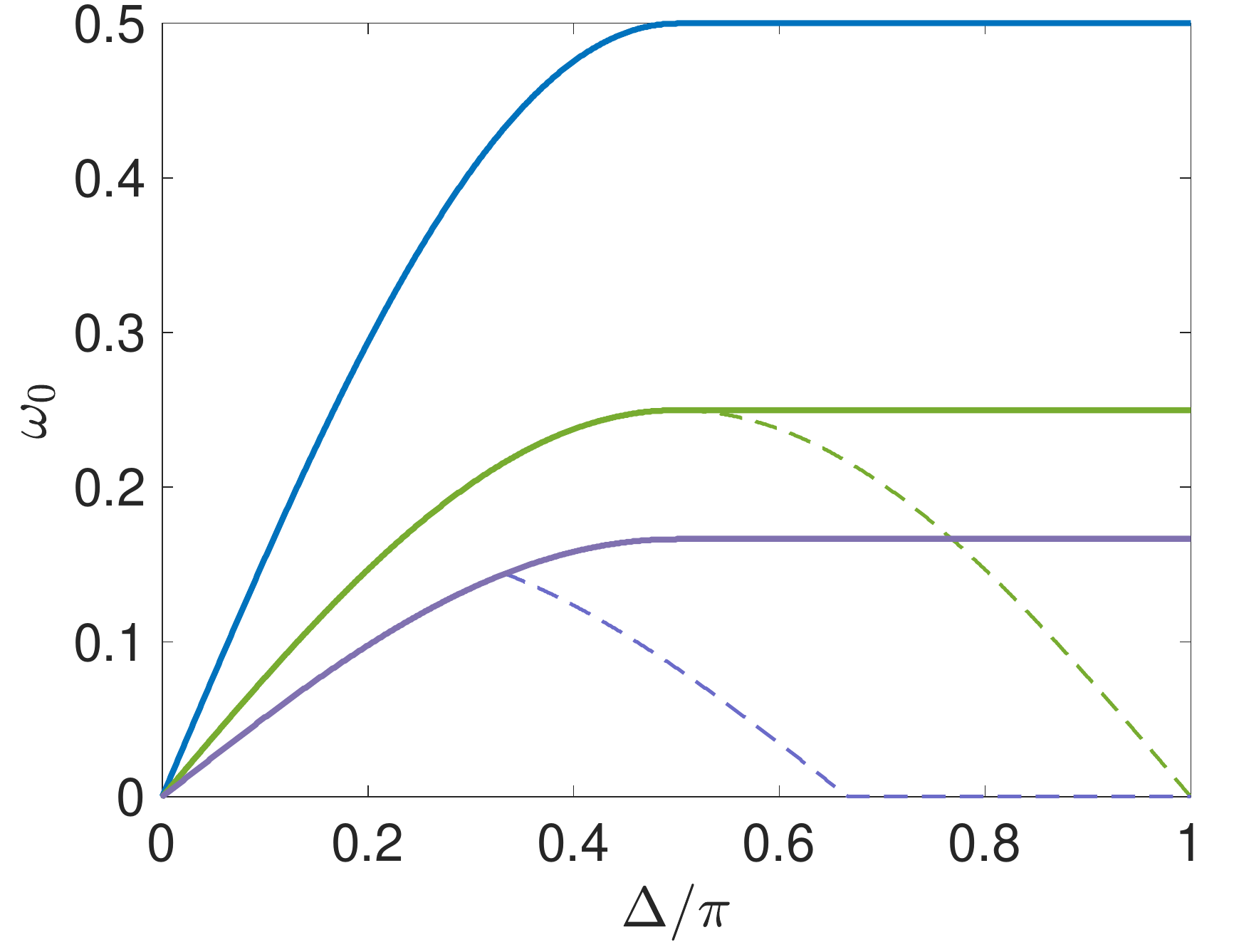}
 \caption{Illustration of the two conditions \eqref{eq:CC1} and \eqref{eq:CC2} for partition with $N_c=1$  (blue), $N_c=2$ (green), and $N_c=3$ (purple) cluster(s) of equal size. 
 Condition \eqref{eq:CC1} is satisfied below the plain line, and \eqref{eq:CC2} trivially satisfied for $\Delta \leq \Delta_{\max} = \frac{2\pi}{N_c}-\frac{\pi}{2}$ and is satisfied below the dashed line for $\Delta>\Delta_{\max}$. }
 \label{fig:CC1CC2}
\end{figure}

Second, the phase diagram will remain the same for $N= 12 a$, $a \in \mathbb{N}^+$ on larger scales, 
with additional domains emerging in the low-$\omega_0$, low-$\Delta$ regions of the phase diagram.
Furthermore, from Prop.~\ref{prop:largest domain of existence}, the partition that is the most stable at fixed number of clusters is the homogeneous one, where all clusters have the same size, if it exists. 

\subsection{Other distributions of natural frequencies}\label{ssec:other}
So far, we have considered a sharply peaked, symmetric distribution of natural frequencies $\omega_i = \pm \omega_0$ and one may wonder 
how much of our findings remain valid for more general distributions. Keeping the condition $\sum_i \omega_i = 0$ without loss of generality, 
the condition to form independent clusters is that $\sum_{i \in \Lambda} \omega_i = 0$ within each cluster $\Lambda$, see Eq.~\eqref{eq:sumcluster}.
Broadening the distribution of $\omega_i$'s, this conditions becomes harder to satisfy, which strongly limits the number of viable partitions. In particular 
if natural frequencies are randomly distributed in certain intervals, none of their partial sums will vanish and only the single cluster fixed-point 
survives. Nontrivial partitions leading to the rich phase diagram shown in Fig.~\ref{fig:EA} require nonrandom distributions of natural frequencies with 
vanishing partial sums.

\subsection{Opinion dynamics with changing rate}
Our analysis extends directly to the continuous-time opinion dynamics with changing rate
defined by
\begin{align}\label{eq:lin_bc}
 \dot{x}_i &= \omega_i - \frac{1}{N}\sum_{j=1}^Na_{ij}(x_i-x_j)\, , \\
 a_{ij} &= \left\{
 \begin{array}{ll}
  1\, , & \text{if } |x_i-x_j|<\Delta\, , \\
  0\, , & \text{otherwise.}
 \end{array}
 \right.
\end{align}
We consider two different cases. 

First, let us consider bounded opinion variables, $x_i \in [x_{\rm min},x_{\rm max}]$. 
Without loss of generality one may consider $x_{\rm min}=-\pi$ and $x_{\rm max}=\pi$. 
In that case, the discussion above remains the same, except for Eq.~\eqref{eq:cc_homog} which becomes
\begin{align}\label{eq:cc_homog_new}
 2\pi-\sum_{\alpha=1}^{N_c} |x_{\alpha,+}^*-x_{\alpha,-}^*| &> (N_c-1)\cdot\Delta\, ,
\end{align}
because boundary conditions are not periodic anymore.
The two conditions for the existence of a partition characterized by $\{ f_\alpha\}$ are given by
\begin{align}\label{eq:CC1_lin}
 \omega_0 &\leq \min_\alpha[f_\alpha]\cdot\Delta\, , & \Delta &\geq 0\, ,
\end{align}
instead of Eq.~\eqref{eq:CC1}, and 
\begin{align}\label{eq:CC2_lin}
  2\pi - \omega_0\sum_{\alpha=1}^{N_c}f_\alpha^{-1} &> (N_c-1)\cdot\Delta\, ,
\end{align}
instead of Eq.~\eqref{eq:CC2}. 
This leads to a similar phase diagram as in Fig.~\ref{fig:EA}, with demarcations being straight lines instead of curves. 

Second, one may consider unbounded opinion variables, $x_i \in \mathbb R$. 
In that case, clusters can be moved away from each other unboundedly and there is no restriction leading to the condition of Eq.~\eqref{eq:CC2_lin}. 
The phase diagram is then composed only of straight lines with the slope given by $\min_\alpha f_\alpha$.

\section{Conclusion and outlook}\label{conclusions}
Above we have introduced and given an extensive analysis of the dominant synchronous states of the bimodal Kuramoto model with bounded confidence.
The combination of nonlinear dynamics and state-dependent couplings renders the analysis of this model very challenging. 
Nevertheless, we manage to give a complete analysis of its synchronous fixed points when the interactions are all-to-all on noninteracting clusters of oscillators.
We have also showed that, even though other fixed point exist, they are less stable and have a smaller domain of existence in the parameter space. 
Consequently, our analysis covers the most relevant synchronous states of the model. 

In contrast to most works on clustering in bounded confidence models, our findings do not rely solely on numerical simulations. 
Our theory provides an analytic description of the fixed points of the Kuramoto model with bounded confidence, which is valid independently of the system size and parameter values. 
In particular, in the limit $\omega_0\searrow 0$, our results translate directly to the standard models of coupled dynamical agents with bounded confidence.~\cite{Lor07} 
It sheds an instructive light on the conditions under which clusters can emerge in models of opinion dynamics, emphasizing in particular the role of the confidence bound ($\Delta$) and of the self driving term ($\omega_0$). 
Furthermore, our stability analysis of different clustering structures gives an insight into the robustness of different opinion patterns against external disturbances. 
Assuming that models of opinion dynamics accurately represent some aspects of opinion formation, our results unravel the role of some social paramters in the construction and polarization of opinion in the population.

As stated in Sec.~\ref{ssec:other}, the richness of behavior observed in this manuscript derives mostly from our specific choice of natural frequencies. 
Further investigations should investigate other interesting distribution of natural frequencies. 

\section*{Acknowledgments}
This work has been supported by the Swiss National Science Foundation under grant 200020\_182050. 
RD acknowledges support from ETH Z\"urich funding. 

%\bibliographystyle{apsrev4-1}
%\bibliography{bibliography}

%merlin.mbs apsrev4-1.bst 2010-07-25 4.21a (PWD, AO, DPC) hacked
%Control: key (0)
%Control: author (72) initials jnrlst
%Control: editor formatted (1) identically to author
%Control: production of article title (-1) disabled
%Control: page (0) single
%Control: year (1) truncated
%Control: production of eprint (0) enabled
%

\end{document}